\documentclass[11pt]{article}

\usepackage[backend=biber,style=alphabetic,maxbibnames=100]{biblatex}
\usepackage{fullpage}
\usepackage[utf8]{inputenc}
\usepackage[dvipsnames,svgnames,x11names,hyperref]{xcolor}
\usepackage[colorlinks]{hyperref}
\usepackage{enumerate}
\usepackage{nicefrac}
\usepackage{graphicx}
\usepackage[textsize=scriptsize]{todonotes}
\usepackage{algorithm}
\usepackage{amsmath}
\usepackage{mathtools}
\usepackage{authblk}
\usepackage{amssymb}
\usepackage{amsthm}
\usepackage{esvect}
\usepackage[normalem]{ulem}
\usepackage{braket}
\usepackage{pgfplots}
\usepackage{multicol}
\usepackage[final]{microtype}

\pgfplotsset{compat=1.6}

\bibliography{bibliography}

\setcounter{biburlnumpenalty}{9000}
\setcounter{biburllcpenalty}{9000}
\setcounter{biburlucpenalty}{9000}

\emergencystretch=1em

\hypersetup{
	colorlinks = true,
	citecolor = teal,
	urlcolor = purple,
}

\newtheorem{theorem}{Theorem}
\newtheorem{lemma}[theorem]{Lemma}

\newtheorem{proposition}[theorem]{Proposition}

\DeclareMathOperator{\tw}{\mathbf{tw}}
\DeclareMathOperator{\TWR}{TWR}
\DeclareMathOperator{\TW}{TW}
\DeclareMathOperator{\be}{H}
\DeclareMathOperator{\poly}{poly}

\let\originalleft\left
\let\originalright\right
\renewcommand{\left}{\mathopen{}\mathclose\bgroup\originalleft}
\renewcommand{\right}{\aftergroup\egroup\originalright}

\DeclareFontFamily{U}{mathx}{\hyphenchar\font45}
\DeclareFontShape{U}{mathx}{m}{n}{
      <5> <6> <7> <8> <9> <10>
      <10.95> <12> <14.4> <17.28> <20.74> <24.88>
      mathx10
      }{}
\DeclareSymbolFont{mathx}{U}{mathx}{m}{n}
\DeclareMathSymbol{\bigtimes}{1}{mathx}{"91}

\makeatletter
\newcommand\dotline{\@ifnextchar[
  \answerlinetowidth\answerlinetoeol}
\newcommand\answerlinetowidth[1][0pt]{\hbox to #1{\leaders\hbox to \answerdotsep{\hss.\hss}\hfill}}
\newcommand\answerlinetoeol{\leaders\hbox to \answerdotsep{\hss.\hss}\hfill\strut}
\newcommand\answerdotsep{0.187cm}
\makeatother

\makeatletter
\newenvironment{breakablealgorithm}
  {
     \refstepcounter{algorithm}
     \hrule height.8pt depth0pt \kern2pt
     \renewcommand{\caption}[2][\relax]{
       {\raggedright\textbf{\fname@algorithm~\thealgorithm} ##2\par}%
       \ifx\relax##1\relax 
         \addcontentsline{loa}{algorithm}{\protect\numberline{\thealgorithm}##2}%
       \else 
         \addcontentsline{loa}{algorithm}{\protect\numberline{\thealgorithm}##1}%
       \fi
       \kern2pt\hrule\kern2pt
     }
  }{
     \kern2pt\hrule\relax
  }
\makeatother

\makeatletter
\newtheorem*{rep@theorem}{\rep@title}
\newcommand{\newreptheorem}[2]{%
\newenvironment{rep#1}[1]{%
 \def\rep@title{#2 \ref{##1}}%
 \begin{rep@theorem}}%
 {\end{rep@theorem}}}
\makeatother
\newreptheorem{theorem}{Theorem}


\title{Quantum speedups for treewidth}
\author{Vladislavs Kļevickis}
\author{Krišjānis Prūsis}
\author{Jevgēnijs Vihrovs}

\affil{Centre for Quantum Computer Science, Faculty of Computing,\authorcr University of Latvia, Raiņa 19, Riga, Latvia, LV-1586}

\date{}

\begin{document}

\maketitle

\begin{abstract}
In this paper, we study quantum algorithms for computing the exact value of the treewidth of a graph.
Our algorithms are based on the classical algorithm by Fomin and Villanger (Combinatorica 32, 2012) that uses $O(2.616^n)$ time and polynomial space.
We show three quantum algorithms with the following complexity, using QRAM in both exponential space algorithms:
\begin{itemize}
    \item $O(1.618^n)$ time and polynomial space;
    \item $O(1.554^n)$ time and $O(1.452^n)$  space;
    \item $O(1.538^n)$ time and space.
\end{itemize}
In contrast, the fastest known classical algorithm for treewidth uses $O(1.755^n)$ time and space.
The first two speed-ups are obtained in a fairly straightforward way. 
The first version uses additionally only Grover's search and provides a quadratic speedup.
The second speedup is more time-efficient and uses both Grover's search and the quantum exponential dynamic programming by Ambainis et al.~(SODA '19).
The third version uses the specific properties of the classical algorithm and treewidth, with a modified version of the quantum dynamic programming on the hypercube.
Lastly, as a small side result, we also give a new classical time-space tradeoff for computing treewidth in $O^*(2^n)$ time and $O^*(\sqrt{2^n})$ space.
\end{abstract}

\section{Introduction} \label{sec:intro}

For many NP-complete problems, the exact solution can be found much faster than a brute-force search over the possible solutions; it is not so rare that the best currently known algorithms are exponential \cite{FK10}.
Perhaps one of the most famous examples is the travelling salesman problem, where a naive brute-force requires $O^*(n!)$ computational time, but a dynamic programming algorithm solves it exactly only in $O^*(2^n)$ time \cite{Bel62, HK62}.
Such algorithms are studied also because they can reveal much about the mathematical structure of the problem and because sometimes in practice they can be more efficient than subexponential algorithms with a large constant factor in their complexity.

With the advent of quantum computing, it is curious how quantum procedures can be used to speed up such algorithms.
A clear example is illustrated by the SAT problem: while iterating over all possible assignments to the Boolean formula on $n$ variables gives $O^*(2^n)$ time, Grover's search \cite{Grover96} can speed this up quadratically, resulting in $O^*(\sqrt{2^n})$ time.
Grover's search can also speed up exponential dynamic programming: recently Ambainis et al.~\cite{ABIKPV19} have shown how to apply Grover's search recursively together with classical precalculation to speed up the $O^*(2^n)$ dynamic programming introduced by Bellman, Held and Karp \cite{Bel62, HK62} to a $O(1.817^n)$ quantum algorithm.
For some problems like the travelling salesman problem and minimum set cover, the authors also gave a more efficient $O(1.728^n)$ time quantum algorithm by combining Grover's search with both divide \& conquer and dynamic programming techniques.
Their approach has been subsequently applied to find a speedup for more NP-complete problems, including graph coloring \cite{SM20}, minimum Steiner tree \cite{MIKL20} and optimal OBDD ordering \cite{Tan20}.

In this paper, we focus on the NP-complete problem of finding the treewidth of a graph.
Informally, the treewidth is a value that describes how close the graph is to a tree; for example, the treewidth is $1$ when the graph is a tree, while the treewidth of a complete graph on $n$ vertices is $n-1$.
This quantity is prominently used in parameterized algorithms, as many problems are efficiently solvable when treewidth is small, such as vertex cover, independent set, dominating set, Hamiltonian cycle, graph coloring, etc.~\cite{AP89}.
The applications of treewidth, both theoretical and practical, are numerous, see \cite{Bod05} for a survey.
If the treewidth is at most $k$, it can be computed exactly in $O(n^{k+2})$ time \cite{ACP87}; $2$-approximated in parameterized linear time $2^{O(k)} n$ \cite{Kor21}; $O(\sqrt{\log k})$-approximated in polynomial time \cite{FHL08}; $k$-approximated in $O(k^7 n \log n)$ time \cite{FLSPW18}.

As for exact exponential time treewidth algorithms, both currently most time and space efficient algorithms were proposed by Fomin and Villanger in \cite{FV12}: the first uses $O(1.755^n)$ time and space and the second requires $O(2.616^n)$ time and polynomial space.
The crucial ingredient of these algorithms is a combinatorial lemma that upper bounds the number of connected subsets with fixed neighborhood size (Lemma \ref{thm:comblem}), as well as gives an algorithm that lists such sets.

Our main motivation for tackling these algorithms is that although the $O(1.817^n)$ quantum algorithm from \cite{ABIKPV19} is applicable to treewidth, it is still less efficient than Fomin's and Villanger's.
In this paper we show that their techniques are also amenable to quantum search procedures.
In particular, we focus on their polynomial space algorithm.
This algorithm has two nested procedures: the first procedure uses Lemma \ref{thm:comblem} to search through specific subsets of vertices $S$ to be fixed as a bag of the tree decomposition; the second procedure finds the optimal width of the tree decomposition with $S$ as its bag.

We find that Grover's search can be applied to the listing procedure of Lemma \ref{thm:comblem}, thus speeding up the first procedure quadratically.
For the second procedure, classically one can use either the $O^*(2^n)$ time and space dynamic programming algorithm or the $O^*(4^n)$ time and polynomial space divide \& conquer algorithm (Fomin and Villanger use the latter), which both were introduced in \cite{BFKAKT12}.
The divide \& conquer algorithm we can also speed up using Grover's search.
Thus, we obtain a quadratic speedup for the polynomial space algorithm:
\begin{reptheorem}{thm:qtwdq}
There is a bounded-error quantum algorithm that finds the exact treewidth of a graph on $n$ vertices in $O(1.61713^n)$ time and polynomial space.
\end{reptheorem}
\noindent Next, using the fact that the $O^*(2^n)$ dynamic programming algorithm can be sped up to an $O^*(1.817^n)$ quantum algorithm together with the quadratic speedup of Lemma \ref{thm:comblem}, we obtain our second quantum algorithm:
\begin{reptheorem}{thm:qtwdp}
Assuming the QRAM data structure, there is a bounded-error quantum algorithm that finds the exact treewidth of a graph on $n$ vertices in $O(1.55374^n)$ time and $O(1.45195^n)$ space.
\end{reptheorem}

The last theorem suggests a possibility for an even more efficient algorithm by trading some space for time.
We achieve this by proving a treewidth property which essentially states that we can precalculate some values of dynamic programming for the original graph, and reuse these values in the dynamic programming for its subgraphs (Lemma \ref{thm:global}). 
This allows us a global precalculation, which can be used in the second procedure of the treewidth algorithm.
To do that, we have to modify the $O(1.817^n)$ algorithm of \cite{ABIKPV19}.
We refer to it as the asymmetric quantum exponential dynamic programming.
This gives us the following algorithm:
\begin{reptheorem}{thm:main}
Assuming the QRAM data structure, there is a bounded-error quantum algorithm that finds the exact treewidth of a graph on $n$ vertices in $O(1.53793^n)$ time and space.
\end{reptheorem}

Lastly, we observe that replacing the $O^*(4^n)$ divide \& conquer algorithm in the classical $O(2.616^n)$ polynomial space algorithm by the $O^*(2^n)$ dynamic programming only lowers the time complexity to $O^*(2^n)$.
However, the interesting consequence is that the space requirement becomes only $O^*(\sqrt{2^n})$.
Hence, we obtain a \emph{classical} time-space tradeoff:
\begin{reptheorem}{thm:tradeoff}
The treewidth of a graph with $n$ vertices can be computed in $O^*(2^n)$ time and $O^*\left(\sqrt{2^n}\right)$ space.
\end{reptheorem}
\noindent Time-wise, this is more efficient than the $O(2.616^n)$ time polynomial space algorithm, and space-wise, this is more efficient than the $O(1.755^n)$ time and space algorithm.
It also fully subsumes the time-space tradeoffs for permutation problems proposed in \cite{KP10} applied to treewidth.

\section{Preliminaries} \label{sec:prelim}

We denote the set of integers from $1$ to $n$ by $[n]$.
For a set $S$, denote the set of all its subsets by $2^S$.
We call a permutation of a set of vertices $S \subseteq V$ a bijection $\pi : S \to [|S|]$.
We denote the set of permutations of $S$ by $\Pi(S)$.
For a permutation $\pi \in \Pi(S)$, let $\pi_{<v} = \{w \mid \pi(w) < \pi(v)\}$ and  $\pi_{>v} = \{w \mid \pi(w) > \pi(v)\}$.

We write $O(f(n)) = \poly(n)$ if $f(n) = O(n^c)$ for some constant $c$.
Also let $O(\poly(n) f(m)) = O^*(f(m))$.
This is useful since our subprocedures will often have some running time $f(m)$ times some function that depends on the size of the input graph $G$ on $n$ vertices.
In this paper, we are primarily concerned with the exponential complexity of the algorithms, hence, we are interested in the $f(m)$ value of an $O^*(f(m))$ complexity.

\paragraph{Graph notation.}

For a graph $G=(V,E)$ and a subset of vertices $S \subseteq V$, denote $G[S]$ as the graph induced in $G$ on $S$.
For a subset of vertices $S \subseteq V$, let $N(S) = \{ v \in V - S \mid u \in S, \{u,v\} \in E\}$ be its \emph{neighborhood}.
We call a subset $S \subseteq V$ connected if $G[S]$ is connected, and $C \subseteq V$ a clique if $G[C]$ is a complete graph.
Later on we also mention the notions of \emph{potential maximum cliques} and \emph{minimal separators}, which are specific subsets of $V$, but we don't rely on them; for their definitions and properties, see e.g.~\cite{FV12}.

\paragraph{Treewidth.}

A \emph{tree decomposition} of a graph $G=(V,E)$ is a pair $(X,T)$, where $T = (V_T,E_T)$ and $X = \{\chi_i \mid i \in V_T\} \subseteq 2^V$ such that:
\begin{itemize}
    \item $\bigcup_{\chi \in X} \chi = V$;
    \item for each edge $\{u,v\} \in E$, there exists $\chi \in X$ such that $u, v \in \chi$;
    \item for any vertex $v \in V$ in $G$, the set of vertices $\{\chi \mid v \in \chi\}$ forms a connected subtree of $T$.
\end{itemize}
We call the subsets $\chi \in X$ \emph{bags} and the vertices of $T$ \emph{nodes}.
The \emph{width} of $T$ is defined as the minimum size of $\chi \in X$ minus $1$.
The \emph{treewidth} of $G$ is defined as the minimum width of a tree decomposition of $G$ and we denote it by $\tw(G)$.

We also consider optimal tree decompositions given that some subset $\chi \in V$ is a bag of the tree.
We denote the smallest width of a tree decomposition of $G$ among those that contain $\chi$ as a fixed bag by $\tw(G,\chi)$.

\paragraph{Approximations.}

For the binomial coefficients, we use the following well-known approximation:
\begin{theorem}[Entropy approximation] \label{thm:entropy}
For any $k \in [0,1]$, we have
\[\binom{n}{k} \leq 2^{\be\left(\frac{k}{n}\right) \cdot n},\]
where $\be(\epsilon) = -(\epsilon \log_2(\epsilon) + (1-\epsilon) \log_2(1-\epsilon))$ is the binary entropy function.
\end{theorem}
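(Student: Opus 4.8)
The plan is to use the standard trick of bounding a single binomial term by the full binomial expansion of $1 = (\epsilon + (1-\epsilon))^n$, where $\epsilon = k/n \in [0,1]$. First I would dispose of the degenerate cases $\epsilon \in \{0,1\}$ (equivalently $k \in \{0,n\}$): there $\binom{n}{k} = 1$ and $\be(\epsilon) = 0$ under the usual convention $0\log_2 0 = 0$, so the claimed inequality holds with equality. For the remaining case $\epsilon \in (0,1)$, I would write
\[
1 = (\epsilon + (1-\epsilon))^n = \sum_{i=0}^{n} \binom{n}{i} \epsilon^{i} (1-\epsilon)^{n-i} \geq \binom{n}{k} \epsilon^{k} (1-\epsilon)^{n-k},
\]
where the inequality follows because every term of the sum is nonnegative and we simply retain the $i = k$ term. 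Rearranging gives $\binom{n}{k} \leq \epsilon^{-k}(1-\epsilon)^{-(n-k)}$.

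The second step is to take base-$2$ logarithms of both sides and recognize the entropy function. Using $k = \epsilon n$ and $n - k = (1-\epsilon)n$, we get
\[
\log_2 \binom{n}{k} \leq -k \log_2 \epsilon - (n-k) \log_2 (1-\epsilon) = -n\left(\epsilon \log_2 \epsilon + (1-\epsilon)\log_2(1-\epsilon)\right) = \be(\epsilon)\, n,
\]
and exponentiating yields $\binom{n}{k} \leq 2^{\be(k/n)\cdot n}$, as desired.

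There is no genuine obstacle here; the only points requiring a little care are checking that the terms dropped from the binomial sum are nonnegative (which holds since $0 \leq \epsilon \leq 1$) and treating the boundary values $\epsilon \in \{0,1\}$ separately so that the logarithms are well defined. If one prefers to avoid the case split, an alternative is a short induction on $n$ via Pascal's recurrence $\binom{n}{k} = \binom{n-1}{k-1} + \binom{n-1}{k}$ combined with concavity of $\be$, but the expansion argument above is the cleanest and is what I would write up.
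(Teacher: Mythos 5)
Your proof is correct, and it is the standard textbook argument: retain a single term of the binomial expansion of $(\epsilon + (1-\epsilon))^n$ and take logarithms. Note that the paper does not supply a proof at all --- it states Theorem~\ref{thm:entropy} as a well-known approximation and immediately moves on --- so there is no in-paper argument to compare against; your write-up simply fills that gap with the canonical derivation, including the appropriate handling of the boundary cases $k \in \{0,n\}$ via the convention $0\log_2 0 = 0$.
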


\paragraph{Quantum subroutines.}

Our algorithms use a well-known variation of Grover's search, quantum minimum finding:
\begin{theorem}[Theorem 1 in \cite{DH96}] \label{thm:qmf}
Let $\mathcal A : N \to [n]$ be an exact quantum algorithm with running time $T$.
Then there is a bounded-error quantum algorithm that computes $\min_{i \in [N]} \mathcal A(i)$ in $O^*(T \sqrt{N})$ time.
\end{theorem}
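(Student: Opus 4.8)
The plan is to reproduce the Dürr--Høyer minimum-finding procedure and check that its cost is inflated only by the running time $T$ of $\mathcal A$ and by polynomial factors. Since $\mathcal A$ is an \emph{exact} quantum algorithm, we may run it and its inverse coherently: for a fixed index $y$ we build, at cost $O(T) + \poly(n)$, a unitary that marks exactly the set $M_y = \{\, i : \mathcal A(i) < \mathcal A(y)\,\}$ (with ties among equal values broken by index), by computing $\mathcal A(i)$ into an ancilla register, comparing it with the stored value $\mathcal A(y)$, flipping a phase, and uncomputing. This is the oracle fed to Grover's search.

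The algorithm keeps a ``threshold'' index $y$, initialized uniformly at random over the index set. It then repeats: run Grover's search over $M_y$ with the oracle above; if it returns some $i$ with $\mathcal A(i) < \mathcal A(y)$, set $y \leftarrow i$ and continue, otherwise keep $y$. Because the size $|M_y|$ of the set of indices beating the current threshold is unknown and shrinks during the run, each Grover call uses the exponential-schedule variant of Boyer et al.: it repeatedly picks a random number of Grover iterations from a geometrically growing range, with expected total $O(\sqrt{N/|M_y|})$ when $|M_y|>0$, cut off at $O(\sqrt N)$ iterations so that $|M_y|=0$ is detected. When no improvement is found for long enough, we output the current $y$.

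Correctness and the cost bound rest on two observations. First, the threshold visits a strictly decreasing sequence of ranks, so the process converges to the element of rank $1$, i.e.\ the minimum; and if we list the values in increasing order, the $k$-th smallest element ever becomes a threshold with probability exactly $1/k$, so the expected number of threshold updates is $\sum_{k=1}^N 1/k = O(\log N)$. Second --- the delicate point --- one bounds the \emph{total} expected number of Grover iterations over all rounds: conditioning on the random sequence of ranks $r_1 > r_2 > \cdots$ that become thresholds, the work of the round starting from rank $r_j$ is $O(\sqrt{N/r_{j+1}})$, and taking expectation over the ranks (again via the $1/k$ probabilities, or equivalently bounding $\mathbb{E}[\sqrt{N/r}]$ when $r$ is the rank of a uniformly random element below the current threshold, then iterating) makes the sum telescope to $O(\sqrt N)$. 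Multiplying by the $O(T)+\poly(n)$ cost per oracle call and by an $O(\log(1/\delta))$ repetition factor to drive the failure probability below any constant $\delta$, the running time is $\poly(n) \cdot T\sqrt N = O^*(T\sqrt N)$, as claimed.

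\textbf{Main obstacle.} The amortized analysis of the last paragraph is the crux: a naive bound gives $\Theta(\sqrt N)$ iterations in each of $\Theta(\log N)$ rounds, i.e.\ $\Theta(\sqrt N \log N)$; the gain comes from the fact that once the threshold has rank $r$, the expected Grover cost to improve it is only $O(\sqrt{N/r})$, and the expectation of $\sum_j \sqrt{N/r_{j+1}}$ over the random decreasing ranks is $O(\sqrt N)$. Everything else --- compiling the exact algorithm $\mathcal A$ into a marking oracle, the exponential schedule for an unknown number of marked elements, and boosting the success probability --- is routine.
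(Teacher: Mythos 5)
The paper does not prove this theorem; it is imported verbatim as Theorem~1 of D\"urr and H\o yer's paper \cite{DH96}, so there is no ``paper's own proof'' to compare against. Your reconstruction is a faithful account of the standard D\"urr--H\o yer argument: compile the exact algorithm $\mathcal A$ and its inverse into a comparison-marking oracle, iterate Grover with the Boyer--Brassard--H\o yer--Tapp exponential schedule against a shrinking threshold, and amortize the iteration count using the fact that rank $r$ is visited with probability $1/r$ and, once visited, costs $O(\sqrt{N/(r-1)})$ expected iterations to improve, so $\sum_r \tfrac{1}{r}\sqrt{N/(r-1)}=O(\sqrt N)$. Your phrasing ``the work of the round starting from rank $r_j$ is $O(\sqrt{N/r_{j+1}})$'' is slightly imprecise --- the expected cost of that round is $O(\sqrt{N/(r_j-1)})$, determined by the current threshold's rank, not the (as yet unknown) next one --- but the summation you then carry out is the correct $1/r$-weighted one, so the conclusion stands. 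This is exactly the argument D\"urr and H\o yer give, up to that cosmetic slip.
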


Two of our algorithms use the QRAM data structure \cite{GLM08}.
This structure stores $N$ memory entries and, given a superposition of memory indices together with an empty data register $\sum_{i \in [N]} \alpha \ket{i} \ket{\mathbf{0}}$, it produces the state $\sum_{i \in [N]} \alpha \ket{i} \ket{\text{data}_i}$ in $O(\log N)$ time.
In our algorithms, $N$ will always be exponential in $n$, which means that a QRAM operation is going to be polynomial in $n$.
Thus, this factor will not affect the exponential complexity, which we are interested in.

In our algorithms, we will often have a quantum algorithm that takes exact subprocedures (like in Theorem \ref{thm:qmf}), and give it bounded-error subprocedures.
Since we always going to take $O(\exp(n))$ number of inputs, this issue can be easily solved by repeating the subprocedures $\poly(n)$ times to boost the probability of correct answer to $1-O(1/\exp(n))$: it can be then shown that the branch in which all the procedures have correct answers has constant amplitude.
The final bounded-error algorithm incurs only a polynomial factor, and does not affect the exponential complexity.
We also note that on a deeper perspective, all our quantum subroutines are based on the primitive of Grover's search \cite{Grover96}; an implementation of Grover's search with bounded-error inputs that does not incur additional factors in the complexity has been shown in \cite{HMDw03}.

We also are going to encounter an issue that sometimes we have some real parameter $\alpha \in [0,1]$ and we are examining $\binom{n}{\alpha n}$.
Since $\alpha n$ is not integer, this value is not defined; however, we can take this to be any value between $\binom{n}{\lfloor \alpha n \rfloor}$ or $\binom{n}{\lceil \alpha n \rceil}$, as they differ only by a factor of $n$.
Thus, this does not produce an issue for the exponential complexity analysis.
Henceforward we abuse the notation and simply write $\binom{n}{\alpha n}$.

\section{Combinatorial lemma} \label{sec:comblem}

In this section we describe how the main combinatorial lemma of \cite{FV12} can be sped up quantumly qudratically using Grover's search.

\begin{lemma}[Lemmas 3.1.~and 3.2.~in \cite{FV12}] \label{thm:comblem}
Let $G=(V,E)$ be a graph.
For every $v \in V$ and $b, f \geq 0$, the number of connected subsets $B \subseteq V$ such that
\begin{enumerate}
    \item $v \in B$,
    \item $|B| = b+1$, and
    \item $|N(B)|=f$
\end{enumerate}
is at most $\binom{b+f}{b}$.
There also exists an algorithm that lists all such sets in $O^*(\binom{b+f}{b})$ time and polynomial space.
\end{lemma}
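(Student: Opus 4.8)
The plan is to set up an injection from the family of sets $B$ satisfying conditions (1)--(3) into the set of binary strings of length $b+f$ with exactly $b$ ones; since there are $\binom{b+f}{b}$ such strings, this proves the counting bound, and replaying the injection over all such strings yields the listing algorithm.

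Fix an arbitrary linear order on $V$. I would describe a deterministic \emph{exploration process} that maintains two disjoint vertex sets $A$ (``accepted'') and $R$ (``rejected''), initialized to $A=\{v\}$ and $R=\emptyset$. While $N(A)\setminus R\neq\emptyset$, the process takes its smallest element $u$, reads one input bit, moves $u$ into $A$ if the bit is $1$ and into $R$ otherwise, and repeats; it halts as soon as $N(A)\setminus R=\emptyset$. Given a valid set $B$, feed the process the bit sequence answering ``$u\in B$?'' at each step. The first key claim is that this run never stalls and halts with $A=B$ (hence $R=N(A)=N(B)$): as long as $A\subsetneq B$, connectedness of $B$ gives an edge from $A$ to $B\setminus A$, so $N(A)\cap B\neq\emptyset$; since $A\subseteq B$, $R\subseteq N(B)$, and $B\cap N(B)=\emptyset$, that vertex lies in $N(A)\setminus R$, so the loop continues and can terminate only once $A=B$. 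Therefore the run makes exactly $|B\setminus\{v\}|=b$ accept-steps and exactly $|N(B)|=f$ reject-steps, i.e. it reads a binary string of length $b+f$ and Hamming weight $b$.

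Injectivity is then immediate: the process is deterministic once its input string is fixed, so from the string alone one recovers $A$ at halting, which equals $B$; hence distinct valid sets map to distinct strings and their number is at most $\binom{b+f}{b}$. For the listing algorithm I would enumerate all $\binom{b+f}{b}$ binary strings of weight $b$, one at a time using only polynomial space, and for each string replay the exploration process, computing $u=\min(N(A)\setminus R)$ in polynomial time at each step; when the process halts, output $A$ if $|A|=b+1$ and $|N(A)|=f$ and discard it otherwise (a string not in the image either stalls early, exhausts its bits before halting, or halts with the wrong sizes, all of which are detected). By the injection above each valid $B$ is produced exactly once, the total time is $O^*\!\left(\binom{b+f}{b}\right)$, and the working memory stays polynomial. (Equivalently one can run this as a pruned depth-first search over the binary decision tree, cutting a branch as soon as $|A|>b+1$ or $|R|>f$; each leaf is a prefix of some length-$(b+f)$ weight-$b$ string, so the tree has $O^*\!\left(\binom{b+f}{b}\right)$ nodes.)

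The step I expect to be the main obstacle is the first claim about the exploration process --- that, fed the correct membership bits, it neither stalls nor stops too early and terminates precisely with $A=B$ and $R=N(B)$ --- since this is exactly where connectedness of $B$ and the disjointness $B\cap N(B)=\emptyset$ must be combined carefully; once that is pinned down, the counting bound, the injectivity, and the complexity analysis of the listing procedure are routine bookkeeping.
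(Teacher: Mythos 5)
Your proof is correct and is essentially the same argument as Fomin and Villanger's (Lemmas 3.1--3.2 of their paper), which the present paper cites for this lemma rather than re-proving. Both explore the boundary $N(A)\setminus R$ of the growing connected set in a fixed vertex order, deciding accept/reject for each candidate, and both in effect map each target set $B$ to a weight-$b$ binary string of length $b+f$; the only cosmetic difference is that FV12 organize the branching as a multi-way recursion (indexed by the position of the first accepted neighbor, with a contraction step and an induction on $b$), whereas you organize it as per-candidate binary branching together with a direct injection into binary strings --- the same decision tree, grouped differently.
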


Informally, this lemma is used in the treewidth algorithm to search for a set, such that, if fixed as a bag of the tree decomposition, the remaining graph breaks down into connected components of bounded size; then, the optimal width of the tree decomposition with this bag fixed can be solved using algorithms from Section \ref{sec:fixbag}.
The lemma gives an upper bound on the number of sets to consider.

Their proof of this lemma essentially gives a branching algorithm that splits the problem into several problems of the same type, and solves them recursively.
The idea for applying Grover's search to such a branching algorithm is simple.
The algorithm that generates all sets can be turned into a procedure that, given a number $i \in [\binom{b+f}{b}]$ of the set we need to generate, generates this set in polynomial time.
Then, we can run Grover's search over all integers in $\binom{b+f}{b}$ on this procedure.
This was formalized by Shimizu and Mori:

\begin{lemma}[Lemma 4 in \cite{SM20}]
Let $P$ be a decision problem with parameters $n_1, \ldots, n_\ell$.
Suppose that there is a branching rule $b(P)$ that reduces $P$ to $m_{b(P)}$ problems $P_1, \ldots, P_{m_{b(P)}}$ of the same class.
Here, $P_i$ has parameters $f_j^{b(P),i}(n_j)$ for $j\in[\ell]$, where $f_j^{b(P),i} \leq n_j$.
At least one of the parameters of $P_i$ must be strictly smaller than the corresponding parameter of $P$.
The solution for $P$ is equal to the minimum of the solutions for $P_1$, $\ldots$, $P_{m_b(P)}$.

Let $U(n_1,\ldots,n_\ell)$ be an upper bound on the number of leaves in the computational tree.
Assume that the running time of computing $b(P)$, $P_i$, $f_j^{b(P),i}$ and $U(n_1,\ldots,n_\ell)$ is polynomial w.r.t.~$n_1$, $\ldots$, $n_{\ell}$.
Suppose that
$U(n_1,\ldots,n_\ell) \geq \sum_{i=1}^{m_b(P)} U(f_1^{b(P),i}(n_1),\ldots,f_\ell^{b(P),i}(n_\ell))$.
Also suppose that $T$ is the running time for the computation at each of the leaves in the computational tree.
Then there is a bounded-error quantum algorithm that computes $P$ and has running time $\poly(n_1,\ldots,n_\ell)\sqrt{U(n_1,\ldots,n_\ell)}T$.
\end{lemma}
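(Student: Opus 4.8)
The plan is to collapse the whole branching recursion into a single application of quantum minimum finding (Theorem~\ref{thm:qmf}) over a flat index set of size $U(n_1,\ldots,n_\ell)$, in the spirit of Grover-searching the leaves of the computation tree: each index will be decoded in polynomial time into a leaf of the tree (or flagged as \emph{invalid}), the solution value at that leaf computed in time $T$, and the minimum of all these values returned.

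First I would record the structural fact that the computation tree has polynomial depth. Along any root-to-leaf path at least one parameter strictly decreases at each branching while every parameter stays in $\{0,1,\ldots,n_j\}$, so the depth is at most $n_1+\cdots+n_\ell$. Since $b(P)$, the $P_i$, the functions $f_j^{b(P),i}$ and $U$ are all polynomial-time computable, we can therefore walk from the root to any prescribed node of the tree in polynomial time.

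Next I would set up a budget-splitting index. Assign the root budget $U(n_1,\ldots,n_\ell)$, and at an internal node with budget $B$ whose children $P_1,\ldots,P_m$ have budgets $B_i:=U(f_1^{b(P),i}(n_1),\ldots,f_\ell^{b(P),i}(n_\ell))$, use the hypothesis $\sum_{i}B_i\le B$ to hand child $i$ the contiguous sub-range $(B_1+\cdots+B_{i-1},\,B_1+\cdots+B_i]$ of $\{1,\ldots,B\}$, declaring the leftover tail $(B_1+\cdots+B_m,\,B]$ invalid. An index $k\in\{1,\ldots,U(n_1,\ldots,n_\ell)\}$ is then decoded by iterating this rule and shifting $k$ at each step until it either becomes invalid or reaches a leaf; by the depth bound this costs polynomial time. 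Every leaf has budget at least $1$, since the subtree rooted at it contains at least one leaf (itself), so every leaf receives a valid index and every valid index maps to a unique leaf. Defining $g(k)$ to be the solution value at the decoded leaf, or $+\infty$ when $k$ is invalid (encoded as a value exceeding all genuine solution values), and unwinding the hypothesis that each internal node's solution is the minimum over its children, we obtain that the solution of $P$ equals $\min_{k\in\{1,\ldots,U(n_1,\ldots,n_\ell)\}}g(k)$. Feeding $g$, which runs in time $\poly(n_1,\ldots,n_\ell)+T$, to Theorem~\ref{thm:qmf} over this index set yields a bounded-error quantum algorithm with running time $\poly(n_1,\ldots,n_\ell)\,\sqrt{U(n_1,\ldots,n_\ell)}\,T$; if the leaf computation is itself only bounded-error, we first boost its success probability by polynomially many repetitions as described in the preliminaries, which does not affect the exponential complexity.

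The step I expect to be the main obstacle is the indexing bookkeeping: one must check that the inequality $U(n_1,\ldots,n_\ell)\ge\sum_{i}U(\cdots)$ is exactly what makes the child sub-ranges fit inside $\{1,\ldots,B\}$, that the invalid indices—which necessarily exist whenever $U$ overcounts the leaves—are absorbed harmlessly by the $+\infty$ convention, and that the polynomial depth bound is genuinely available (the parameters really being discrete and bounded by the $n_j$). Everything else is a direct invocation of Theorem~\ref{thm:qmf}.
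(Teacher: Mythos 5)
Your proposal is correct and matches the approach the paper itself sketches just before citing the lemma: turn the branching recursion into a procedure that, given an index $i \in [U(n_1,\ldots,n_\ell)]$, deterministically decodes a leaf of the computation tree (or flags the index as invalid) in polynomial time, and then run quantum minimum finding over the index set. The paper leaves the formalization to Shimizu and Mori \cite{SM20} and does not reproduce the proof, but your budget-splitting/index-decoding construction, the use of $U(n_1,\ldots,n_\ell)\ge\sum_i U(\cdots)$ to make the sub-ranges nest, the $+\infty$ treatment of the slack, the polynomial-depth observation, and the final invocation of Theorem~\ref{thm:qmf} are exactly the standard way this is made rigorous.
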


We apply this lemma to Lemma \ref{thm:comblem}:
\begin{lemma} \label{thm:quantum-comb}
Let $G=(V,E)$ be a graph, and $\mathcal A : 2^V \to \mathbb [n]$ be an exact quantum algorithm with running time $T$.
For every $v \in V$ and $b, f \geq 0$, let $\mathcal B_{v,b,f}$ be the set of connected subsets $B \subseteq V$ satisfying the conditions of Lemma \ref{thm:comblem}.
Then there is a bounded-error quantum algorithm that computes $\min_{B \in \mathcal B_{v,b,f}} \mathcal A(B)$ in time
$$O^*\left(\sqrt{\binom{b+f}{b}}\right).$$
\end{lemma}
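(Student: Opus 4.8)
The plan is to combine the listing algorithm of Lemma~\ref{thm:comblem} with quantum minimum finding (Theorem~\ref{thm:qmf}), or equivalently to invoke the branching-to-quantum conversion of Shimizu and Mori. First I would recall that Lemma~\ref{thm:comblem} asserts not only the bound $|\mathcal B_{v,b,f}| \leq \binom{b+f}{b}$, but also the existence of a polynomial-space algorithm that lists all sets in $\mathcal B_{v,b,f}$ in $O^*\!\left(\binom{b+f}{b}\right)$ time. As noted in the discussion preceding the statement, this listing algorithm arises from a recursive branching procedure: at each node it branches into a constant number of subproblems of the same type, in each of which at least one of the parameters $b, f$ strictly decreases, and the total number of leaves is bounded by $U(b,f) = \binom{b+f}{b}$. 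The key structural fact is the Pascal-type inequality $\binom{b+f}{b} \geq \binom{b-1+f}{b-1} + \binom{b+f-1}{b}$, which is exactly the subadditivity hypothesis $U(b,f) \geq \sum_i U(\ldots)$ required by the Shimizu--Mori lemma; everything else (computing the branching, the updated parameters, and $U$) is polynomial.

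Given this, I would proceed as follows. Define the decision/optimization problem $P$ at a node of the branching tree to be ``return $\min$ of $\mathcal A$ over all sets in $\mathcal B$ consistent with the partial choices made so far at this node''; the parameters are the current $b$ and $f$ (and $v$, the fixed required vertex, which is carried along). The branching rule is precisely the one from the proof of Lemma~\ref{thm:comblem}, and the solution for $P$ is the minimum of the solutions of its children, so the recursion is compatible with taking minima. At each leaf, a single set $B \in \mathcal B_{v,b,f}$ has been fully determined in polynomial time, and we evaluate $\mathcal A(B)$ in time $T$. Applying the Shimizu--Mori lemma (Lemma~4 of \cite{SM20}) with $U(b,f) = \binom{b+f}{b}$ then yields a bounded-error quantum algorithm computing $\min_{B \in \mathcal B_{v,b,f}} \mathcal A(B)$ in time $\poly(n)\sqrt{\binom{b+f}{b}}\,T = O^*\!\left(\sqrt{\binom{b+f}{b}}\right)$, since $T \leq \poly(n)$ is absorbed and $\mathcal A$ has range $[n]$. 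Alternatively, and more directly, one can reindex the listing algorithm so that on input $i \in \left[\binom{b+f}{b}\right]$ it outputs the $i$-th set in polynomial time — this ``unranking'' is straightforward for a balanced branching tree whose subtree leaf-counts are computable — and then run quantum minimum finding (Theorem~\ref{thm:qmf}) over $i \in \left[\binom{b+f}{b}\right]$ on the composed procedure $i \mapsto \mathcal A(B_i)$.

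The main subtlety, and the only place that needs care, is the conversion of the ``list all sets'' algorithm into a ``produce the $i$-th set'' algorithm in polynomial time, together with accounting correctly for the fact that the true number of sets may be strictly less than $\binom{b+f}{b}$. The first point is handled exactly by the Shimizu--Mori framework: since at each branching node we can compute $U$ for each child in polynomial time, we can route an index $i$ down the tree to the correct leaf, spending only polynomial time per level and only polynomially many levels (each step strictly decreases a parameter bounded by $n$). The second point is harmless: if $|\mathcal B_{v,b,f}| < \binom{b+f}{b}$, some leaves of the padded computational tree correspond to no valid set, and we simply let $\mathcal A$ return $+\infty$ (or $n+1$) there, which does not affect the minimum; the quantum minimum-finding routine is unaffected by such dummy entries. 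Everything else — boosting the bounded-error subprocedure $\mathcal A$ to exponentially small error by $\poly(n)$ repetitions so that it can be fed into the exact-input minimum-finding primitive, and absorbing the QRAM-free Grover implementation with bounded-error inputs — is exactly the boilerplate already discussed in the Preliminaries and contributes only polynomial factors.
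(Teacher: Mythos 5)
Your overall strategy is exactly the paper's: invoke the Shimizu--Mori lemma (Lemma~4 of \cite{SM20}) applied to the Fomin--Villanger listing procedure of Lemma~\ref{thm:comblem}, and evaluate $\mathcal A$ at each leaf. The conclusion follows, and the alternative phrasing via unranking plus quantum minimum finding is essentially an unpacking of the same argument.

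However, the verification of the Shimizu--Mori precondition in your proposal is based on a misdescription of the branching structure, and this is precisely the step where care is required. You assert that the listing algorithm ``branches into a constant number of subproblems'' and that the relevant subadditivity condition is Pascal's rule $\binom{b+f}{b} \geq \binom{b-1+f}{b-1} + \binom{b+f-1}{b}$. Both claims are wrong for Fomin and Villanger's actual algorithm. As the paper notes, the branching rule at a node with parameters $(b,f)$ produces $m_{b(P)} = b+f$ children (not a constant number), with child $i$ having parameters $(b-1,\, f-i+1)$; the subadditivity bound that must be checked is therefore the hockey-stick identity
\[
\sum_{i=1}^{f+b} \binom{f+b-i}{b-1} \;=\; \sum_{j=0}^{f+b-1}\binom{j}{b-1} \;=\; \binom{b+f}{b},
\]
not Pascal's rule. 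To be fair, the Shimizu--Mori framework does not require a constant branching factor (only that the branching and the bound $U$ be computable in polynomial time), so a non-constant factor is not a problem in itself; but since the whole content of this lemma's proof is verifying the hypotheses of Lemma~4 of \cite{SM20} against the concrete branching of Lemma~\ref{thm:comblem}, asserting the wrong branching shape and the wrong combinatorial identity leaves a genuine hole in the argument, even though the theorem statement you arrive at is correct. The fix is simply to replace your Pascal-based verification with the $(b+f)$-ary branching and the hockey-stick computation above.
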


\begin{proof}
According to the proof of Lemma \ref{thm:comblem} in \cite{FV12}, we have that
\begin{itemize}
    \item $\ell=2$, $n_1 = b$, $n_2 = f$.
    \item $b(P)$ splits the problem into $m_{b(P)} = f+b$ problems.
    \item $P_i$ has parameters $f_1^{b(P),i}(b) = b-1$ and $f_2^{b(P),i}(f) = f-i+1$.
    \item $U(b,f) = \binom{b+f}{f}$.
    \item $\sum_{i=1}^{m_b(P)} U\left(f_1^{b(P),i}(b),f_2^{b(P),i}(f)\right) = \sum_{i=1}^{f+b} \binom{f+b-i}{b-1} = \sum_{i=0}^{f+b-1} \binom{f+b-1-i}{b-1} = \binom{b+f}{f} = U(b,f)$.
    \item Computing $b(P)$, $f_1^{b(P),i}$, $f_2^{b(P),i}$ and $U(b,f)$ takes time polynomial in $b$ and $f$; computing $P_i$ involves contracting two vertices in the graph and can be done in $\poly(n)$ time. \qedhere
\end{itemize}
\end{proof}

\section{Fixed bag treewidth algorithms} \label{sec:fixbag}

In this section we describe algorithms that calculate the optimal treewidth of a graph with the condition that a subset of its vertices is fixed as a bag of the tree decomposition.
We then show ways to speed them up quantumly.
Both approaches were given by Bodlaender et al.~\cite{BFKAKT12}.

\subsection{Treewidth as a linear ordering}

Both of these algorithms use the fact that treewidth can be seen as a graph linear ordering problem.
For a detailed description, see Section 2.2 of \cite{BFKAKT12}, from where we also borrow a lot of notation.
We will also use the properties of this formulation in our improved quantum algorithm.

A linear ordering of a graph $G=(V,E)$ is a permutation $\pi \in \Pi(V)$.
The task of a linear ordering problem is finding $\min_{\pi \in \Pi(V)} f(\pi)$, for some known function $f$.

For two vertices $v, w \in V$, define a predicate $P_\pi(v,w)$ to be true iff there is a path from $v$ to $w$ in $G$ such that all internal vertices in that path are before $v$ and $w$ in $\pi$.
Then define $R_\pi(v)$ to be the number of vertices $w$ such that $\pi(w) > \pi(v)$ and $P_\pi(v,w)$ holds.
The following proposition gives a description of treewidth as a linear ordering problem:
\begin{proposition}[Proposition 3 in \cite{BFKAKT12}]
Let $G=(V,E)$ be a graph, and $k$ a non-negative integer.
The treewidth of $G$ is at most $k$ iff there is a linear ordering $\pi$ of $G$ such that for each $v \in V$, we have $R_\pi(v) \leq k$.
\end{proposition}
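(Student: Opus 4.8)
The plan is to prove the two implications separately, working directly with tree decompositions (one could instead route through the characterization ``$\tw(G)\le k$ iff $G$ has a chordal supergraph of clique number $\le k+1$'' together with the classical description of the fill-in of an elimination ordering, but the direct argument is shorter). For the ``only if'' direction I would induct on $|V|$. Given a tree decomposition $(X,T)$ of width at most $k$, first repeatedly delete any leaf node whose bag is contained in the bag of its unique neighbour; this preserves validity, the width, and $G$ itself. If a single node remains, then $|V|\le k+1$ and every $\pi\in\Pi(V)$ has $R_\pi(v)\le |V|-1\le k$. Otherwise pick a leaf $i$ with neighbour $j$ and a vertex $v\in\chi_i\setminus\chi_j$; since the nodes whose bags contain $v$ induce a connected subtree avoiding $j$ and $i$ is a leaf, $\chi_i$ is the only bag containing $v$, so $N(v)\subseteq\chi_i\setminus\{v\}$ and in particular $|N(v)|\le k$. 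Let $G'$ be $G$ with $v$ deleted and $N(v)$ turned into a clique; replacing $\chi_i$ by $\chi_i\setminus\{v\}$ (every new edge lies inside it) gives a width-$\le k$ tree decomposition of $G'$, so by induction there is $\pi'\in\Pi(V\setminus\{v\})$ with $R_{\pi'}(w)\le k$ for all $w$. Let $\pi$ place $v$ first and then follow $\pi'$. Then $R_\pi(v)=|N(v)|\le k$ (with $v$ first the only path witnessing $P_\pi(v,\cdot)$ is a single edge), and for $w\neq v$ I would show $R^G_\pi(w)\le R^{G'}_{\pi'}(w)\le k$: any path witnessing $P^G_\pi(w,u)$ is either disjoint from $v$ (hence already a path in $G'$) or passes through $v$ exactly once, in which case the two path-neighbours of $v$ lie in $N(v)$ and are joined by a fill-in edge in $G'$, so shortcutting yields a path in $G'$ with a subset of the original internal vertices, all still preceding $w$.

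For the ``if'' direction I would give an explicit construction. Assuming $R_\pi(v)\le k$ for all $v$, set $\beta_v=\{v\}\cup\{w : \pi(w)>\pi(v)\text{ and }P_\pi(v,w)\}$, so that $|\beta_v|=R_\pi(v)+1\le k+1$, and build a tree on these bags by joining $\beta_v$ to $\beta_{v'}$, where $v'$ is the element of $\beta_v\setminus\{v\}$ with smallest $\pi$-value (joining $\beta_v$ arbitrarily inside its connected component when $\beta_v=\{v\}$). The first two axioms are immediate: $v\in\beta_v$, and for $\{v,w\}\in E$ with $\pi(v)<\pi(w)$ the single-edge path gives $w\in\beta_v$. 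For the coherence axiom I would prove the key claim: if $v\neq w$, $w\in\beta_v$, and the parent $v'$ of $v$ satisfies $v'\neq w$, then $w\in\beta_{v'}$. Indeed, concatenating a $P_\pi(v,v')$-path with a $P_\pi(v,w)$-path through $v$ produces a walk from $v'$ to $w$ whose internal vertices all lie below both endpoints — here one uses $\pi(v)<\pi(v')\le\pi(w)$, the second inequality holding precisely because $v'$ minimises $\pi$ on $\beta_v\setminus\{v\}$ — so $P_\pi(v',w)$ holds and hence $w\in\beta_{v'}$. Since parents strictly increase $\pi$-values and $\pi(w)$ bounds them, iterating the claim shows that the bags containing a fixed $w$ are exactly those on the ascending paths from the various $\beta_v$ with $w\in\beta_v$ up to $\beta_w$, which form a subtree.

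The step I expect to be the main obstacle is the coherence axiom in the ``if'' direction: the construction works only because the parent $v'$ is taken to be the $\pi$-minimum fill-in neighbour of $v$, and it is exactly this choice that forces every internal vertex of the concatenated path to lie below both $v'$ and $w$; getting this right is the crux. The remaining bookkeeping — handling disconnected $G$, isolated bags $\beta_v=\{v\}$, and the leaf-cleanup at the start of the induction — is routine, and I would dispatch it briefly rather than in detail.
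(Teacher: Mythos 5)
Your proof is correct. Note that the paper itself does not prove this proposition—it simply imports it as Proposition~3 of \cite{BFKAKT12}—so there is no in-paper proof to compare against; what you have reconstructed is the standard correspondence between tree decompositions and elimination orderings. Both directions are sound: the ``only if'' direction peels off a vertex that lives in a single bag of a cleaned-up tree decomposition and recurses on the graph with its neighbourhood filled in, and the shortcutting argument correctly shows $R^G_\pi(w)\le R^{G'}_{\pi'}(w)$; the ``if'' direction is the usual construction with $\beta_v=\{v\}\cup\{w:\pi(w)>\pi(v),\,P_\pi(v,w)\}$ and parent chosen as the $\pi$-minimum of $\beta_v\setminus\{v\}$, and your concatenation-then-shortcut argument for the coherence axiom is precisely right, with the inequality $\pi(v)<\pi(v')\le\pi(w)$ doing the work you identify as the crux. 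Two very minor expository remarks: in the ``if'' direction you should say explicitly that the parent map yields a forest (one root per connected component of $G$, namely the vertex maximizing $\pi$ there) before patching components together, and the parenthetical ``joining $\beta_v$ arbitrarily inside its connected component'' is slightly garbled since such a $\beta_v=\{v\}$ is already the root of its component's tree via its children—what actually needs an arbitrary edge is the join \emph{between} components. Neither affects correctness.
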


For a set of vertices $S \subseteq V$ and a vertex $v \notin S$, define
\[Q_G(S,v) = |\{ w \in V - S - \{v\} \mid \text{$v$ and $w$ are connected by a path in $G[S \cup \{v,w\}]$}\}|.\]
Note that $R_\pi(v) = |Q_G(\pi_{<v},v)|$, and $|Q_G(S,v)|$ can be computed in $\poly(n)$ time using, for example, depth-first search.

Then define
\[\TWR_G(L,S) = \min_{\substack{\pi \in \Pi(V) \\ \text{$L$ is a prefix of $\pi$}}} \max_{v \in S} |Q_G(L \cup \pi_{<v},v)|.\]
Also define
\[\TW_G(S) = \min_{\pi \in \Pi(V)} \max_{v \in S} |Q_G(\pi_{<v},v)|.\]
These notations are connected by the relation
\[\TW_G(S) = \TWR_G(\varnothing,S).\]
Note that $\tw(G)$ is equal to $\min_{\pi \in \Pi(V)} \max_{v \in V} R_\pi(v) = \TW_G(V)$.

The following lemma gives a way to find optimal fixed bag tree decompositions using the algorithms for finding the optimal linear arrangements:
\begin{lemma}[Lemma 11 in \cite{BFKAKT12}] \label{thm:clique}
Let $C \subseteq V$ induce a clique in a graph $G = (V,E)$.
The treewidth of $G$ equals $\max(\TW_G(V - C), |C|-1)$.
\end{lemma}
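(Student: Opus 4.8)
The statement to prove is Lemma~\ref{thm:clique}: if $C \subseteq V$ induces a clique in $G=(V,E)$, then $\tw(G) = \max(\TW_G(V-C), |C|-1)$.

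The plan is to prove the two inequalities separately. For the lower bound $\tw(G) \geq \max(\TW_G(V-C), |C|-1)$, the bound $\tw(G) \geq |C|-1$ is the classical fact that the treewidth of any graph is at least the size of its largest clique minus one (every clique must fit into a single bag, which follows from the Helly property of subtrees of a tree applied to the connected subtrees $\{\chi \mid v \in \chi\}$ for $v \in C$ together with the edge-covering property). For $\tw(G) \geq \TW_G(V-C)$, I would use the linear-ordering characterization: from an optimal tree decomposition of width $\tw(G)$ one extracts a linear ordering $\pi$ of $V$ with $R_\pi(v) \le \tw(G)$ for all $v$ (Proposition~3 of \cite{BFKAKT12}), and then observe that $\TW_G(V-C) = \min_\pi \max_{v \in V-C} |Q_G(\pi_{<v},v)| \le \max_{v \in V-C} R_\pi(v) \le \tw(G)$ for that particular $\pi$.

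For the upper bound $\tw(G) \leq \max(\TW_G(V-C), |C|-1)$, the key idea is to take the ordering $\pi^*$ achieving $\TW_G(V-C)$ and modify it so that all vertices of $C$ come last, in an arbitrary internal order. The point is that placing a clique at the tail of the ordering is ``free'': once we are in the suffix consisting entirely of clique vertices, for any such vertex $v$ the quantity $R_\pi(v) = |Q_G(\pi_{<v},v)|$ counts vertices $w$ after $v$ reachable through earlier vertices, and since the remaining vertices after $v$ are all in $C$ and pairwise adjacent to $v$, we get $R_\pi(v) = |\pi_{>v}| \le |C|-1$. Meanwhile, moving the clique vertices to the end does not increase $R_\pi(v)$ for $v \in V-C$: one has to check that pushing vertices later in the order can only remove vertices from the ``after'' side and hence can only shrink each $Q_G(\pi_{<v},v)$ — more carefully, one argues that for $v \in V-C$ the value $|Q_G(\pi_{<v},v)|$ under the modified ordering is at most its value under $\pi^*$, because the set of vertices counted is a subset (the prefix $\pi_{<v}$ restricted to $V-C$ is unchanged and the candidates $w$ lie in $V-C$). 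Then we convert this ordering back into a tree decomposition of width $\max(\TW_G(V-C),|C|-1)$ via Proposition~3.

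The main obstacle I anticipate is the bookkeeping in the upper-bound direction: one needs to verify rigorously that relocating the clique vertices to the suffix does not increase $R_\pi(v)$ for the non-clique vertices, which requires a careful look at how $Q_G(S,v)$ depends on $S$ and on which vertices lie after $v$ — in particular that paths witnessing connectivity in $G[S \cup \{v,w\}]$ behave monotonically under the reordering. A clean way to sidestep part of this is to invoke the structural results of \cite{BFKAKT12} directly (this is their Lemma~11), but within a self-contained argument the clique-at-the-end manipulation, together with the two easy observations above ($R_\pi(v) \le |C|-1$ inside the clique suffix, and monotonicity for $v \notin C$), is the heart of the proof.
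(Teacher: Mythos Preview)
The paper does not prove this lemma; it is cited from \cite{BFKAKT12} with only the one-line gloss ``$C$ can be placed in the end of the optimal arrangement.'' So there is no detailed proof to compare against, but your proposal contains a genuine gap that is worth pointing out.

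Your lower bound is fine. The problem is the upper bound, specifically the monotonicity step. You write that moving the clique vertices to the end ``can only remove vertices from the `after' side and hence can only shrink each $Q_G(\pi_{<v},v)$,'' and later that ``the candidates $w$ lie in $V-C$.'' Both of these are wrong. Pushing a clique vertex $c$ from before $v$ to after $v$ \emph{removes} $c$ from $\pi_{<v}$ and \emph{adds} $c$ to the candidate set $\pi_{>v}$; in particular the candidates for $w$ now include all of $C$, not just $V-C$. And the individual values $R_\pi(v)$ can strictly increase: take $G$ the path $a\text{--}c\text{--}b$ with $C=\{c\}$. Under $\pi^*=(c,a,b)$ we have $|Q_G(\{c,a\},b)|=0$, but after moving $c$ to the end, $\pi=(a,b,c)$ gives $|Q_G(\{a\},b)|=1$. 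So the claim ``does not increase $R_\pi(v)$ for $v\in V-C$'' is false as stated, and the justification you sketch does not repair it.

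The clean fix is to avoid the reordering argument altogether and instead choose $\pi^*$ well from the start. By unfolding the recurrence of Lemma~\ref{thm:recur} inductively, one gets that $\TW_G(S)$ is always achieved by some ordering in which $S$ is a \emph{prefix}; the paper states and uses exactly this fact in the proof of Lemma~\ref{thm:global}. Applying it with $S=V-C$ gives an optimal $\pi$ for $\TW_G(V-C)$ that already has $C$ as a suffix. Then your two endgame observations go through verbatim: for $v\in V-C$ one has $R_\pi(v)\le \TW_G(V-C)$ by optimality, and for $v\in C$ one has $\pi_{>v}\subseteq C$ so $R_\pi(v)=|\pi_{>v}|\le |C|-1$ since $C$ is a clique. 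This yields $\tw(G)\le\max(\TW_G(V-C),|C|-1)$ with no monotonicity argument needed.
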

Essentially, this lemma tells us that $C$ can be placed in the end of the optimal arrangement.

\begin{lemma} \label{thm:suffix}
Let $G=(V,E)$ be a graph, and $\chi \subseteq V$ a subset of its vertices.
Then 
\[\tw(G,\chi) = \max\left(\TW_G(V-\chi),|\chi|-1\right).\]
\end{lemma}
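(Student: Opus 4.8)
The plan is to prove the two inequalities $\tw(G,\chi) \le \max(\TW_G(V-\chi),|\chi|-1)$ and $\tw(G,\chi) \ge \max(\TW_G(V-\chi),|\chi|-1)$ separately, reducing in both cases to the clique version already established in Lemma~\ref{thm:clique}. The obvious difficulty is that $\chi$ need not induce a clique, so Lemma~\ref{thm:clique} does not apply directly; the standard workaround is to pass to the graph $G' = (V, E \cup \binom{\chi}{2})$ obtained from $G$ by adding all edges between vertices of $\chi$, so that $\chi$ becomes a clique in $G'$, and then relate tree decompositions of $G'$ to tree decompositions of $G$ that contain $\chi$ as a bag.

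The first step is to observe that $(X,T)$ is a tree decomposition of $G'$ if and only if it is a tree decomposition of $G$ that has some bag $\chi' \supseteq \chi$: adding the clique edges on $\chi$ forces, by the second tree-decomposition axiom together with the Helly property of subtrees of a tree (the standard fact that if every pair among a family of subtrees of $T$ pairwise intersects, they have a common node), the existence of one bag containing all of $\chi$; conversely any decomposition of $G$ with a bag containing $\chi$ trivially covers the new edges. Hence $\tw(G') = \min\{\,\mathrm{width}(X,T) : (X,T)$ is a tree decomposition of $G$ with some bag $\supseteq \chi\,\}$. Now I claim this equals $\tw(G,\chi)$, i.e.\ requiring a bag $\supseteq\chi$ is no cheaper than requiring $\chi$ itself to be a bag: given an optimal decomposition of $G'$ with bag $\chi' \supseteq \chi$, one can insert a new node adjacent to that node with bag exactly $\chi$, which keeps all three axioms valid (the new bag is a subset of an existing neighbouring bag, so connectivity is preserved) and does not increase the width since $|\chi| \le |\chi'|$. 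Therefore $\tw(G,\chi) = \tw(G')$.

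The second step applies Lemma~\ref{thm:clique} to $G'$ with the clique $C = \chi$: it gives $\tw(G') = \max\!\left(\TW_{G'}(V-\chi),\,|\chi|-1\right)$. It remains to show $\TW_{G'}(V-\chi) = \TW_G(V-\chi)$. This is where I expect the only real work: I need that for any permutation $\pi$ and any vertex $v \in V-\chi$, the quantity $|Q_{G'}(\pi_{<v},v)|$ that involves only vertices outside $\chi$ is unaffected by the added edges inside $\chi$. Concretely, $Q_{G'}(S,v)$ counts $w \in V-S-\{v\}$ joined to $v$ by a path in $G'[S\cup\{v,w\}]$; since we only ever maximize over $v \in V-\chi$ and count $w \in V-\chi$ as well (wait — $w$ ranges over all of $V-S-\{v\}$, so this needs care), I will instead argue that the added edges lie entirely inside $\chi$, and use the fact (implicit in Lemma~\ref{thm:clique}'s proof, and which I would cite or re-derive) that in the expression $\TW_G(V-C)$ one may restrict to orderings $\pi$ in which all of $C$ appears last; for such orderings $\pi_{<v} \subseteq V-\chi$ for every $v \in V-\chi$, so the relevant induced subgraphs $G[S\cup\{v,w\}]$ and $G'[S\cup\{v,w\}]$ either coincide (when $w \notin \chi$) or differ only in edges incident to $\chi$ which cannot shorten connectivity in a way that matters once $C$ is placed last. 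Making this last comparison airtight — showing the $\max_{v}$ value is genuinely identical for $G$ and $G'$ under the "clique last" normalization — is the main obstacle; everything else is bookkeeping with the tree-decomposition axioms. Combining the three steps yields $\tw(G,\chi) = \tw(G') = \max(\TW_{G'}(V-\chi),|\chi|-1) = \max(\TW_G(V-\chi),|\chi|-1)$, as claimed.
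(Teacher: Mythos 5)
Your approach is essentially the paper's: pass to the graph $G'$ obtained from $G$ by completing $\chi$ into a clique, argue $\tw(G,\chi)=\tw(G')$, and invoke Lemma~\ref{thm:clique}. The subtlety you flag in your last paragraph --- that Lemma~\ref{thm:clique} applied to $G'$ produces $\TW_{G'}(V-\chi)$, whereas the statement wants $\TW_G(V-\chi)$ --- is real and is glossed over by the paper's two-sentence proof, so you are right to worry about it. It closes cleanly, and more tightly than your hedged phrasing ``differ only in edges incident to $\chi$ which cannot shorten connectivity in a way that matters'' suggests. By Lemma~\ref{thm:recur} (which the paper uses for exactly this prefix-normalization purpose in the proof of Lemma~\ref{thm:global}), $\TW_G(V-\chi)$ is achieved by some permutation $\pi$ having $V-\chi$ as a prefix, so $\pi_{<v}\subseteq V-\chi$ for every $v\in V-\chi$. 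Then for any candidate $w$ the set $\pi_{<v}\cup\{v,w\}$ contains at most one vertex of $\chi$ (namely $w$, if $w\in\chi$), so it hosts \emph{none} of the added intra-$\chi$ edges; hence $G[\pi_{<v}\cup\{v,w\}]$ and $G'[\pi_{<v}\cup\{v,w\}]$ are literally the same graph, giving $Q_G(\pi_{<v},v)=Q_{G'}(\pi_{<v},v)$ for all $v\in V-\chi$ and therefore $\TW_{G'}(V-\chi)\le\TW_G(V-\chi)$. The reverse inequality is immediate since $G$ is a subgraph of $G'$ (adding edges can only enlarge each $Q$). With that equality in hand, your first-step reductions are all correct: the Helly property of subtrees gives $\tw(G')=\tw(G',\chi)$, and inserting a leaf node with bag exactly $\chi$ below the bag $\chi'\supseteq\chi$ shows insisting on $\chi$ itself costs nothing; and ``completing a bag into a clique does not change the width'' is precisely the paper's one-line justification for $\tw(G,\chi)=\tw(G',\chi)$.
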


\begin{proof}
Completing a bag of a tree decomposition into a clique does not change the width of the tree decomposition.
The claim then follows from Lemma \ref{thm:clique}.
\end{proof}

In the final treewidth algorithms, we will also use the following fact:

\begin{lemma} \label{thm:tw-comp}
Let $G=(V,E)$ be a graph and $\chi \subseteq V$ a subset of its vertices.
Let $\mathcal C$ be the set of connected components of $G[V - \chi]$.
Then 
\[\tw(G,\chi) = \max_{C \in \mathcal{C}} \tw(G[C \cup \chi],\chi).\]
\end{lemma}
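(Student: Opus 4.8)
The plan is to prove both inequalities $\tw(G,\chi) \geq \max_{C \in \mathcal C} \tw(G[C\cup\chi],\chi)$ and $\tw(G,\chi) \leq \max_{C \in \mathcal C} \tw(G[C\cup\chi],\chi)$ using the linear ordering characterization, via Lemma~\ref{thm:suffix}. By Lemma~\ref{thm:suffix}, $\tw(G,\chi) = \max(\TW_G(V-\chi), |\chi|-1)$ and $\tw(G[C\cup\chi],\chi) = \max(\TW_{G[C\cup\chi]}(C), |\chi|-1)$ for each $C$. Since the $|\chi|-1$ term is common, it suffices to show
\[
\TW_G(V-\chi) = \max_{C \in \mathcal C} \TW_{G[C\cup\chi]}(C),
\]
or at least that taking the max with $|\chi|-1$ on both sides makes them equal. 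So the reduction is: prove the displayed identity about $\TW$ values of the subgraphs induced by each component together with $\chi$.

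First I would observe the key structural fact: for distinct components $C, C' \in \mathcal C$, there are no edges between $C$ and $C'$ in $G$, and moreover any vertex $w$ in a different component $C'$ is never ``reachable'' from a vertex $v \in C$ through $G[S \cup \{v,w\}]$ when $S \subseteq V - \chi$, because such a path would have to leave $C$, and the only way to do that is through $\chi$ — but $\chi \cap (V-\chi) = \varnothing$, so no internal vertex of the path lies in $S$ unless it is in $C$. Hence for $v \in C$ and $S \subseteq V-\chi$, the quantity $Q_G(S,v)$ only counts vertices $w \in C$, and in fact $Q_G(S,v) = Q_{G[C\cup\chi]}(S \cap C, v)$ — the part of $S$ outside $C$ is irrelevant, and we may intersect with $C$ freely. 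This is the crux and the main obstacle: carefully verifying that the ``connected by a path in $G[S\cup\{v,w\}]$'' condition localizes to the component of $v$, using that $\chi$ separates the components and $S$ avoids $\chi$.

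Given that localization, for the $\leq$ direction I would take, for each $C \in \mathcal C$, an optimal ordering $\pi_C \in \Pi(C\cup\chi)$ witnessing $\TW_{G[C\cup\chi]}(C)$ with $C$ as a prefix (possible since the $\max$ in $\TW$ is over $v \in C$ and by the structure we may push $\chi$ to the end, cf.\ the remark after Lemma~\ref{thm:clique}); then concatenate the prefixes over all components in some order, followed by $\chi$, to build $\pi \in \Pi(V)$. For each $v$ in component $C$, $Q_G(\pi_{<v},v) = Q_{G[C\cup\chi]}((\pi_{<v})\cap C, v) \leq \TW_{G[C\cup\chi]}(C)$ because the prefix of $\pi$ up to $v$, intersected with $C$, is exactly the prefix of $\pi_C$ up to $v$. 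Taking the max over $v$ gives $\TW_G(V-\chi) \leq \max_C \TW_{G[C\cup\chi]}(C)$. For the $\geq$ direction, I would take an optimal $\pi \in \Pi(V)$ for $\TW_G(V-\chi)$; for each $C$, restricting $\pi$ to $C \cup \chi$ yields an ordering $\pi|_{C\cup\chi}$, and for $v \in C$ the localization gives $Q_{G[C\cup\chi]}((\pi|_{C\cup\chi})_{<v}, v) = Q_G(\pi_{<v},v) \leq \TW_G(V-\chi)$, since restricting the order does not change which vertices of $C\cup\chi$ precede $v$. Hence $\TW_{G[C\cup\chi]}(C) \leq \TW_G(V-\chi)$ for every $C$, so the max over $C$ is $\leq \TW_G(V-\chi)$. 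Combining the two directions finishes the identity, and then Lemma~\ref{thm:suffix} finishes the lemma.
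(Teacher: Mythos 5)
Your proof is correct in substance but takes a genuinely different route from the paper's. The paper argues directly with tree decompositions, not with linear orderings: starting from an optimal tree decomposition $(X,T)$ of $G$ that contains $\chi$ as a bag, restricting every bag to $C \cup \chi$ gives a tree decomposition of $G[C \cup \chi]$ with $\chi$ as a bag and width at most $\tw(G,\chi)$, proving $\max_{C} \tw(G[C\cup\chi],\chi) \le \tw(G,\chi)$; and gluing tree decompositions of the pieces $G[C\cup\chi]$ at the shared bag $\chi$ (valid because no edge of $G$ joins distinct components of $\mathcal C$) gives the reverse inequality. That argument is short and self-contained. You instead route the lemma through Lemma~\ref{thm:suffix} and the $\TW_G/Q_G$ formalism of Bodlaender et al., reducing it to $\TW_G(V-\chi) = \max_{C} \TW_{G[C\cup\chi]}(C)$, and then using the localization of $Q_G$ to a single component when $S$ avoids $\chi$. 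This is closer to the style of the paper's later Lemma~\ref{thm:global}, makes the ``$\chi$ separates the components'' mechanism explicit at the ordering level, and would transfer to other permutation-based width parameters; the cost is a longer proof and a dependence on the linear-ordering characterization.

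One technical imprecision to flag in the $\geq$ direction: you apply the localization $Q_G(S,v)=Q_{G[C\cup\chi]}(S\cap C,v)$ with $S=\pi_{<v}$ for an arbitrary optimal $\pi\in\Pi(V)$, but your localization requires $S\subseteq V-\chi$, and an optimal $\pi$ need not place all of $V-\chi$ before $\chi$. If $\pi_{<v}$ meets $\chi$, a path from $v$ in $G[\pi_{<v}\cup\{v,w\}]$ can leave $C$ through $\chi$ and reach another component, so the equality $Q_{G[C\cup\chi]}((\pi|_{C\cup\chi})_{<v},v)=Q_G(\pi_{<v},v)$ can fail. This does not break the proof, because you only need the one-sided inequality $Q_{G[C\cup\chi]}(\pi_{<v}\cap(C\cup\chi),v)\le Q_G(\pi_{<v},v)$, which holds for every $\pi$: any qualifying path in the induced subgraph is a qualifying path in $G$, and $w\notin\pi_{<v}$ is preserved. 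Alternatively, do what you already do in the $\leq$ direction and invoke Lemma~\ref{thm:recur} (as the paper does in the proof of Lemma~\ref{thm:global}) to take the optimal $\pi$ with $V-\chi$ as a prefix, after which your equality is valid.
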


\begin{proof}
Let $(X,T)$ be a tree decomposition with the smallest width $w$ that contains $\chi$ as a bag.
For a connected component $C \in \mathcal C$, examine the tree decomposition $(X_C,T_C)$ obtained from $(X,T)$ by removing all vertices not in $\chi$ or $C$ from all bags.
Clearly, this is a tree decomposition of $G[C \cup \chi]$ with $\chi$ as a bag; as we only have possibly removed some vertices, its width is at most $w$.
Now, examine the tree decomposition obtained by taking all $(X_C,T_C)$ and making $\chi$ its common bag.
This is a valid tree decomposition, since no two vertices in distinct connected components of $\mathcal C$ are connected by an edge.
Its width is the maximal width of $(X_C,T_C)$, therefore at most $w$.
\end{proof}

\subsection{Divide \& Conquer}

The first algorithm is based on the following property:
\begin{lemma}[Lemma 7 in \cite{BFKAKT12}] \label{thm:partition}
Let $G=(V,E)$ be a graph, $S \subseteq V$, $|S| \geq 2$, $L \subseteq V$, $L \cap S = \varnothing$, $1 \leq k < |S|$.
Then
\[\TWR_G(L,S) = \min_{\substack{S'\subseteq S\\|S'|=k}} \max\left( \TWR_G(L,S'),\TWR_G(L \cup S', S-S')\right).\]
\end{lemma}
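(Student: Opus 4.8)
My plan is to establish the stated identity by proving the two inequalities separately; in both directions the work is essentially just careful bookkeeping about which vertices precede a given $v$, and, pleasantly, no monotonicity property of $Q_G$ is needed. Throughout I think of an ordering witnessing a value $\TWR_G(L',S')$ as an ordering of $S'$ placed immediately after the fixed prefix $L'$, so that for $v\in S'$ the set of predecessors of $v$ in the full arrangement is exactly $L'$ together with the $S'$-vertices preceding $v$. I will also use the small observation that $\TWR_G(L\cup S', S-S')$ depends on $L\cup S'$ only through the underlying set (in its definition the prefix enters solely as $(L\cup S')\cup\pi_{<v}$, so the quantity being minimized is really just an ordering of $S-S'$).

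For the inequality $\TWR_G(L,S)\le\min_{S'}\max(\cdot)$ I fix an arbitrary $S'\subseteq S$ with $|S'|=k$, pick an ordering $\sigma$ of $S'$ realizing $\TWR_G(L,S')$ and an ordering $\rho$ of $S-S'$ realizing $\TWR_G(L\cup S', S-S')$, and form the ordering $\pi$ of $S$ that lists $S'$ according to $\sigma$ and then $S-S'$ according to $\rho$. For $v\in S'$ the predecessors of $v$ are $L$ together with the $\sigma$-predecessors of $v$, so $|Q_G(L\cup\pi_{<v},v)|\le\TWR_G(L,S')$; for $v\in S-S'$ they are $L\cup S'$ together with the $\rho$-predecessors of $v$, so $|Q_G(L\cup\pi_{<v},v)|\le\TWR_G(L\cup S', S-S')$. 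Taking the maximum over $v\in S$ gives $\TWR_G(L,S)\le\max(\TWR_G(L,S'),\TWR_G(L\cup S', S-S'))$, and then I minimize over $S'$.

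For the reverse inequality I fix an ordering $\pi$ of $S$ realizing $\TWR_G(L,S)$ and let $S'$ consist of the first $k$ vertices of $S$ in $\pi$; since $1\le k<|S|$, both $S'$ and $S-S'$ are nonempty, and $L\cup S'$ is an initial segment of the full arrangement. Restricting $\pi$ to $S'$ is a witness for $\TWR_G(L,S')$: for $v\in S'$ every predecessor of $v$ already lies in $L\cup S'$, so the objective at $v$ is unchanged, whence $\TWR_G(L,S')\le\TWR_G(L,S)$. The same $\pi$, now read as an ordering of $S-S'$ after the prefix $L\cup S'$, is a witness for $\TWR_G(L\cup S', S-S')$: for $v\in S-S'$ all of $L\cup S'$ precedes $v$, so again the objective at $v$ is unchanged, whence $\TWR_G(L\cup S', S-S')\le\TWR_G(L,S)$. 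Hence $\max(\TWR_G(L,S'),\TWR_G(L\cup S', S-S'))\le\TWR_G(L,S)$ for this particular $S'$, so the minimum over all size-$k$ subsets is at most $\TWR_G(L,S)$ as well, which together with the first part yields equality.

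The step I expect to demand the most care is not any single inequality but the routine-looking verification, in each direction, that the prefix set $L\cup\pi_{<v}$ occurring in the combined instance literally coincides with the one occurring in the relevant sub-instance — for $v$ in the ``early'' block because nothing from the ``late'' block precedes it, and for $v$ in the ``late'' block because the whole of $L\cup S'$ precedes it. Once that is pinned down, the identity drops out with no further appeal to properties of $Q_G$ or of tree decompositions.
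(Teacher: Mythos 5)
The paper imports this statement as Lemma~7 of \cite{BFKAKT12} and gives no proof, so there is no in-paper argument to compare against. Your proof is correct and is the standard one: concatenating optimal witnesses gives ``$\le$'', and cutting an optimal witness at position $k$ gives ``$\ge$''; the observation you single out---that $\TWR_G(L',S'')$ depends on the prefix $L'$ only as a set---is exactly what licenses gluing $\sigma$ and $\rho$, since $\rho$ was chosen against the prefix \emph{set} $L\cup S'$ and not against the particular ordering $L$ followed by $\sigma$.

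One caveat worth recording: you read a witness for $\TWR_G(L,S)$ as an ordering of $S$ placed directly after $L$. That is the intended reading from \cite{BFKAKT12}, and the one under which your argument (and the paper's later use of $\TWR_G(L,\{v\})=|Q_G(L,v)|$) is correct. Taken literally, though, the paper's displayed definition minimizes over all $\pi\in\Pi(V)$ with $L$ a prefix, which would permit vertices of $V-(L\cup S)$ to precede members of $S$; under that reading $\TWR_G(L,\{v\})$ would be $0$ (put $v$ last), and the ``cut at $k$'' step would fail because $\pi_{<v}$ for $v\in S'$ need not lie in $L\cup S'$. So the permutation in the definition should really range over $L\cup S$ with $L$ a prefix (equivalently, over $\Pi(S)$ appended to $L$), and your argument goes through exactly as written once that is fixed.
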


Note that $\TWR_G(L,\{v\}) = |Q_G(L,v)|$ can be calculated in polynomial time.
The value we wish to calculate is $\TWR_G(\varnothing,V-\chi)$.
Picking $k = |S|/2$ in Lemma \ref{thm:partition} and applying Lemma \ref{thm:suffix}, we obtain a $\poly(|V|)4^{|V|-|\chi|}$ deterministic algorithm with polynomial space:
\begin{theorem}[Theorem 8 in \cite{BFKAKT12}] \label{thm:dq}
Let $G=(V,E)$ be a graph on $n$ vertices and $\chi \subseteq V$ a subset of its vertices.
There is an algorithm that calculates $\tw(G,\chi)$ in $O^*(4^{n-|\chi|})$ time and polynomial space.
\end{theorem}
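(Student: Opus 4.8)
The plan is to realize the algorithm directly from the recursive identity of Lemma~\ref{thm:partition}, after first stripping off the fixed bag with Lemma~\ref{thm:suffix}. Write $m = n - |\chi|$ and $S_0 = V - \chi$. Lemma~\ref{thm:suffix} gives $\tw(G,\chi) = \max\bigl(\TW_G(S_0),\,|\chi|-1\bigr) = \max\bigl(\TWR_G(\varnothing,S_0),\,|\chi|-1\bigr)$, and $|\chi|-1$ is computed in polynomial time, so it suffices to evaluate $\TWR_G(\varnothing,S_0)$ within the claimed budget.

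Next I would define a recursive procedure computing $\TWR_G(L,S)$ for disjoint $L,S\subseteq V$. If $|S|\le 1$, return it directly: $\TWR_G(L,\varnothing)$ is the vacuous maximum, and $\TWR_G(L,\{v\}) = |Q_G(L,v)|$ is computable in $\poly(n)$ time by depth-first search. Otherwise put $k=\lfloor |S|/2\rfloor$ (which satisfies $1\le k<|S|$ whenever $|S|\ge 2$), enumerate all $S'\subseteq S$ with $|S'|=k$ one at a time in a space-efficient combinatorial order, recursively compute $\TWR_G(L,S')$ and $\TWR_G(L\cup S',S-S')$ for each, take their maximum, and return the minimum of these maxima over all $S'$; correctness of this return value is exactly Lemma~\ref{thm:partition}. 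Note that $|S'|$ and $|S-S'|$ are both roughly $|S|/2$, so the recursion depth on an input of size $s = |S_0| = m$ is $O(\log m)$.

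For the time bound I would count leaves of the recursion tree, using that the cost of a call depends only on the sizes involved. A call on a set of size $s$ produces $\binom{s}{k}$ choices of $S'$, each spawning two recursive calls on sets of sizes $\approx s/2$; unrolling all $O(\log s)$ levels, the number of base-case leaves is at most $\prod_{i\ge 0}\bigl(2\binom{\lceil s/2^i\rceil}{\lfloor s/2^{i+1}\rfloor}\bigr)$. Since $\binom{\lceil s/2^i\rceil}{\lfloor s/2^{i+1}\rfloor}\le 2^{s/2^i+1}$ and $\sum_{i\ge 0}2^{-i}=2$, this product is at most $\poly(s)\cdot 4^{s}$, so the whole recursion tree has $O^*(4^s)$ nodes. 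As each node does only $\poly(n)$ work — advancing the subset enumerator, a depth-first search for $Q_G$ at a leaf, and a constant number of comparisons — the total time is $O^*(4^{s}) = O^*(4^{n-|\chi|})$. For the space bound I would observe that the recursion stack has depth $O(\log m)$, each frame stores only $L$, $S$, the current $S'$ and a counter driving the enumerator, and the $k$-subsets are generated on the fly rather than stored; hence polynomial space overall.

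The step I expect to be the real obstacle is the running-time accounting, since the crude estimate ``branching $\binom{s}{s/2}\approx 2^{s}$ per node, two children, $O(\log s)$ levels'' superficially suggests a bound like $2^{s\log s}$. The point to get right is that the set sizes are halved at every level, so the exponents contributed by the successive levels form the geometric series $s + s/2 + s/4 + \cdots = 2s$, which collapses the bound to $4^{s}$. Everything else — verifying the hypotheses of Lemmas~\ref{thm:partition} and~\ref{thm:suffix} (disjointness of $L$ and $S$, $|S|\ge 2$, $1\le k<|S|$), and the polynomial-space enumeration of the $k$-element subsets — is routine bookkeeping.
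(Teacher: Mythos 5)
Your proposal is correct and takes essentially the same approach the paper alludes to (and that the cited source \cite{BFKAKT12} uses): reduce to $\TWR_G(\varnothing,V-\chi)$ via Lemma~\ref{thm:suffix}, then recurse by splitting $S$ into two halves via Lemma~\ref{thm:partition} with $k\approx|S|/2$, getting a balanced recursion whose per-level exponents form the geometric series $s+s/2+\cdots=2s$, hence $O^*(4^s)$ time, with polynomial space because the recursion depth is $O(\log n)$ and $k$-subsets are enumerated on the fly.
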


Immediately we can prove a quadratic quantum speedup using Grover's search:
\begin{theorem} \label{thm:quantum-dq}
Let $G=(V,E)$ be a graph on $n$ vertices and $\chi \subseteq V$ a subset of its vertices.
There is a bounded-error quantum algorithm that calculates $\tw(G,\chi)$ in $O^*(2^{n-|\chi|})$ time and polynomial space.
\end{theorem}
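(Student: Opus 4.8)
The plan is to reuse the recursive structure of the classical divide-and-conquer algorithm of Theorem~\ref{thm:dq} and replace its innermost minimization over subsets by quantum minimum finding (Theorem~\ref{thm:qmf}), which buys a quadratic speedup at every level of the recursion; since the classical cost is $4^{m} = (2^{m})^2$ with $m = n-|\chi|$, this brings it down to $2^{m}$.

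First I would reduce to computing $\TWR_G(\varnothing, V-\chi)$: by Lemma~\ref{thm:suffix}, $\tw(G,\chi) = \max\bigl(\TWR_G(\varnothing,V-\chi),\,|\chi|-1\bigr)$, and the second term is trivial. Then I would set up a recursive quantum procedure that, given disjoint $L,S\subseteq V$, returns $\TWR_G(L,S)$: the base case $|S|=1$ (say $S=\{v\}$) returns $|Q_G(L,v)|$ in $\poly(n)$ time; for $|S|\ge 2$, pick $k=\lfloor|S|/2\rfloor$ and, following Lemma~\ref{thm:partition}, run quantum minimum finding over the $\binom{|S|}{k}$ choices of $S'\subseteq S$ with $|S'|=k$, where the objective on input $S'$ is $\max\bigl(\TWR_G(L,S'),\TWR_G(L\cup S',S-S')\bigr)$, evaluated by two recursive calls on sets of sizes $k$ and $|S|-k$. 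Since all the values involved lie in $\{0,\dots,n\}$, Theorem~\ref{thm:qmf} applies as stated, with search space $N=\binom{|S|}{k}$.

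Next I would analyze the recursion. Writing $T(m)$ for the time on a set of size $m$, Theorem~\ref{thm:qmf} together with $\binom{m}{\lfloor m/2\rfloor}=O^*(2^m)$ gives
\[ T(m) = O^*\!\left(\sqrt{\binom{m}{\lfloor m/2\rfloor}}\right)\cdot\left(T(\lfloor m/2\rfloor)+T(\lceil m/2\rceil)\right) = O^*\!\left(2^{m/2}\right)\cdot 2\,T(\lceil m/2\rceil), \]
with $T(1)=\poly(n)$. Unrolling: the recursion tree has depth $O(\log m)$ and hence $\poly(m)$ leaves (each costing $\poly(n)$), and the minimum-finding factors along any root-to-leaf branch multiply to $O^*\bigl(2^{m/2+m/4+\cdots+1}\bigr)=O^*(2^m)$, so $T(m)=O^*(2^m)$ and $\tw(G,\chi)$ is computed in $O^*(2^{n-|\chi|})$ time. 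I would also check that the space stays polynomial, since each minimum-finding call uses $\poly(n)$ workspace and the recursion stack has depth $O(\log n)$ with $\poly(n)$ per frame.

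The main obstacle — really the only point needing care — is the bounded-error composition: Theorem~\ref{thm:qmf} expects an exact subroutine, but here the objective is evaluated by bounded-error recursive calls. I would handle this exactly as flagged in Section~\ref{sec:prelim}: boost each recursive call by $\poly(n)$ repetitions to push its error down to $1/\exp(n)$, so that the branch on which all calls are correct keeps constant amplitude through the $O(\log n)$ levels of nesting; alternatively one appeals to the bounded-error Grover implementation of \cite{HMDw03} underlying minimum finding. Either way no extra exponential factor appears, and the resulting algorithm is bounded-error with the claimed $O^*(2^{n-|\chi|})$ time and polynomial space.
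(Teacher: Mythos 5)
Your proposal is correct and matches the paper's proof: both apply quantum minimum finding (Theorem~\ref{thm:qmf}) to the choice of $S'$ in Lemma~\ref{thm:partition} with $k=\lfloor|S|/2\rfloor$, and both invoke the bounded-error Grover implementation of \cite{HMDw03} to control error accumulation through the recursion. You simply spell out the recurrence $T(m)=O^*(2^{m/2})\cdot 2\,T(\lceil m/2\rceil)$ and its unrolling, which the paper leaves implicit in the phrase ``quadratic speedup over Theorem~\ref{thm:dq}.''
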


\begin{proof}
We can apply quantum minimum finding to check sets $S'$ in Lemma \ref{thm:partition}, in order to obtain a quadratic speedup over Theorem \ref{thm:dq}.
To avoid the accumulation of error in the recursion, we can use the implementation of Grover's search with bounded-error inputs \cite{HMDw03}.
\end{proof}

\subsection{Dynamic programming}

The second algorithm is based on the following recurrence:
\begin{lemma}[Lemma 5 in \cite{BFKAKT12}] \label{thm:recur}
Let $G=(V,E)$ be a graph and $S \subseteq V$, $S \neq \varnothing$.
Then
\[ \TW_G(S) = \min_{v \in S} \max\left( \TW_G(S - \{v\}),|Q_G(S - \{v\},v)| \right).\]
\end{lemma}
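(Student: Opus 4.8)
The plan is to prove the two inequalities $\le$ and $\ge$ separately, each by the standard ``move one vertex'' exchange argument for linear-ordering dynamic programs. Throughout I use that $\TW_G(S)$ depends only on the relative order of the vertices of $S$ (one may restrict the minimization to orderings in which every vertex of $S$ precedes every vertex of $V-S$), so I may speak of ``an ordering $\pi$ of $S$'' and write $\pi_{<v}$ for the set of vertices of $S$ preceding $v\in S$. The single structural fact needed is: if $\pi$ places some $v\in S$ after all other vertices of $S$, then $\pi_{<v}=S-\{v\}$, while for every other $u\in S$ the set $\pi_{<u}$ is exactly the set of predecessors of $u$ in the ordering of $S-\{v\}$ obtained by deleting $v$; hence $Q_G$, evaluated at these predecessor sets, behaves transparently under this operation.

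For $\TW_G(S)\ge \min_{v\in S}\max\left(\TW_G(S-\{v\}),|Q_G(S-\{v\},v)|\right)$: fix an ordering $\pi$ of $S$ attaining $\TW_G(S)$ and let $v$ be the last vertex of $S$ in $\pi$. Deleting $v$ yields an ordering of $S-\{v\}$ in which every $u\in S-\{v\}$ keeps the same predecessor set as in $\pi$, so $\TW_G(S-\{v\})\le \max_{u\in S-\{v\}}|Q_G(\pi_{<u},u)|\le \TW_G(S)$. Also $\pi_{<v}=S-\{v\}$, hence $|Q_G(S-\{v\},v)|=|Q_G(\pi_{<v},v)|\le \TW_G(S)$. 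Taking the maximum of the two bounds and then the minimum over $v$ gives this direction.

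For $\TW_G(S)\le \min_{v\in S}\max\left(\TW_G(S-\{v\}),|Q_G(S-\{v\},v)|\right)$: fix an arbitrary $v\in S$, take an ordering $\sigma$ of $S-\{v\}$ attaining $\TW_G(S-\{v\})$, and let $\pi$ be $\sigma$ with $v$ appended at the end. For $u\in S-\{v\}$ the set $\pi_{<u}$ coincides with the predecessor set of $u$ in $\sigma$, so $|Q_G(\pi_{<u},u)|\le \TW_G(S-\{v\})$; and $\pi_{<v}=S-\{v\}$, so $|Q_G(\pi_{<v},v)|=|Q_G(S-\{v\},v)|$. Therefore $\TW_G(S)\le \max_{u\in S}|Q_G(\pi_{<u},u)|\le \max\left(\TW_G(S-\{v\}),|Q_G(S-\{v\},v)|\right)$, and since $v$ was arbitrary we minimize over $v$.

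I do not expect a genuine obstacle here: this is a routine exchange argument, and in fact the recurrence is exactly the special case $L=\varnothing$, $k=|S|-1$ of Lemma \ref{thm:partition} (the size-$(|S|-1)$ subsets $S'$ of $S$ are precisely the sets $S-\{v\}$, with $S-S'=\{v\}$, $\TWR_G(\varnothing,S-\{v\})=\TW_G(S-\{v\})$ and $\TWR_G(S-\{v\},\{v\})=|Q_G(S-\{v\},v)|$). The only points needing a little care are bookkeeping ones: being explicit that only the intra-$S$ order matters for $\TW_G(S)$; the degenerate case $|S|=1$, where one reads $\TW_G(\varnothing)=0$ so that both sides of the recurrence collapse to $|Q_G(\varnothing,v)|$; and recalling that each $|Q_G(T,v)|$ is polynomial-time computable, which is what makes the recurrence usable as a subset dynamic program.
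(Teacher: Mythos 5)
The paper does not prove this lemma at all: it quotes it verbatim as Lemma~5 of \cite{BFKAKT12}, and the only remark made afterwards is the one-liner that it is the special case $L=\varnothing$, $k=|S|-1$ of Lemma~\ref{thm:partition} (which is also cited). Your exchange argument is a correct, self-contained derivation of the recurrence, and your closing observation that it is this special case of Lemma~\ref{thm:partition} is exactly the paper's remark; so you supply the detail the paper omits while also reproducing the paper's ``proof.''

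One caveat deserves attention. Your opening sentence asserts, as something you ``use,'' that in $\TW_G(S)$ one may restrict to permutations of $V$ in which all of $S$ precedes all of $V-S$. This restriction is genuinely load-bearing: both directions of your exchange need $\pi_{<v}=S-\{v\}$ for the last vertex $v$ of $S$, and the $\le$ direction needs an optimizer for $\TW_G(S-\{v\})$ to be an ordering of $S-\{v\}$ that you can extend by appending $v$. However, this restriction does \emph{not} follow from the paper's literal definition $\TW_G(S)=\min_{\pi\in\Pi(V)}\max_{v\in S}|Q_G(\pi_{<v},v)|$, because $|Q_G(\cdot,v)|$ is not monotone in its set argument. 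On the single-edge graph $G=(\{v,w\},\{\{v,w\}\})$ with $S=\{v\}$, the ordering $(w,v)$ gives $|Q_G(\{w\},v)|=0$ while $(v,w)$ gives $|Q_G(\varnothing,v)|=1$, so the unrestricted minimum is $0$, strictly less than the prefix-restricted value $1$; with the unrestricted reading both Lemma~\ref{thm:recur} and Lemma~\ref{thm:clique} would be false (here $\tw(G)=1\neq\max(0,|C|-1)=0$ for the clique $C=\{w\}$). The resolution is that $\TW_G(S)$, as used in \cite{BFKAKT12} and throughout this paper, must be read as a minimum over orderings that begin with $S$ (equivalently $\min_{\pi\in\Pi(S)}$); the $\Pi(V)$ in the preliminaries is an imprecision of the paper, not a fact from which your restriction can be derived. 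Under that intended reading your proof is complete and correct; it would be cleaner to present the $S$-prefix restriction as the definition being used (the convention of \cite{BFKAKT12}) rather than as a reduction one performs ``without loss of generality.''
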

Note that in fact Lemma \ref{thm:recur} is a special case of Lemma \ref{thm:partition} with $L=\varnothing$ and $k=|S|-1$.
This lemma together with Lemma \ref{thm:suffix} and the dynamic programming technique by Bellman, Held and Karp \cite{Bel62,HK62} gives the following algorithm:
\begin{theorem}[Theorem 6 in \cite{BFKAKT12}] \label{thm:dp}
Let $G=(V,E)$ be a graph on $n$ vertices and $\chi \subseteq V$ a subset of its vertices.
There is an algorithm that calculates $\tw(G,\chi)$ in $O^*(2^{n-|\chi|})$ time and space.
\end{theorem}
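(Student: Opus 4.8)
The plan is to prove Theorem~\ref{thm:dp} by combining the Bellman--Held--Karp dynamic programming scheme with the linear ordering characterization of treewidth established earlier in this section. Recall from Lemma~\ref{thm:suffix} that $\tw(G,\chi) = \max(\TW_G(V-\chi), |\chi|-1)$, and the second term is trivial to compute in polynomial time, so the whole task reduces to computing $\TW_G(V-\chi)$. By the recurrence of Lemma~\ref{thm:recur}, the quantity $\TW_G(S)$ for a nonempty $S \subseteq V-\chi$ depends only on $\TW_G(S-\{v\})$ over the $|S|$ choices of $v \in S$, together with the quantity $|Q_G(S-\{v\},v)|$, which by the remark following the definition of $Q_G$ can be computed in $\poly(n)$ time via depth-first search.

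First I would set up a table indexed by all subsets $S$ of $V-\chi$, storing the value $\TW_G(S)$; since $|V-\chi| = n - |\chi|$, there are $2^{n-|\chi|}$ such subsets, which accounts for the $O^*(2^{n-|\chi|})$ space bound. Next I would fill the table in order of increasing cardinality of $S$, starting from the base case $\TW_G(\varnothing) = 0$ (the maximum over an empty set). For each $S$, I apply Lemma~\ref{thm:recur}: iterate over the $|S| \leq n$ vertices $v \in S$, look up the already-computed value $\TW_G(S-\{v\})$, compute $|Q_G(S-\{v\},v)|$ in polynomial time, take the max of these two and then the min over $v$. Each table entry thus costs $\poly(n)$ time, and there are $2^{n-|\chi|}$ entries, giving total time $O^*(2^{n-|\chi|})$. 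Finally, read off $\TW_G(V-\chi)$ from the table and return $\max(\TW_G(V-\chi), |\chi|-1)$, which equals $\tw(G,\chi)$ by Lemma~\ref{thm:suffix}.

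Since this theorem is quoted verbatim from \cite{BFKAKT12} (Theorem~6 there), there is no genuine obstacle; the only point requiring a little care is the correctness of the recurrence in Lemma~\ref{thm:recur}, which is already cited, and the observation that $Q_G(S-\{v\},v)$ is polynomial-time computable, which is noted immediately after its definition. One should also verify that Lemma~\ref{thm:recur} is indeed applicable to all the subproblems encountered --- but as noted in the text, it is the special case $L = \varnothing$, $k = |S|-1$ of Lemma~\ref{thm:partition}, so no additional work is needed. The proof in the paper will therefore be a one- or two-line appeal to these cited facts plus the standard Bellman--Held--Karp bookkeeping sketched above.
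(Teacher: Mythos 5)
Your proposal is correct and matches the paper's treatment exactly: the paper itself does not give a formal proof, but the sentences surrounding the theorem statement say precisely what you say --- use Lemma~\ref{thm:suffix} to reduce to $\TW_G(V-\chi)$, use the recurrence of Lemma~\ref{thm:recur}, and fill a table over subsets of $V-\chi$ in increasing cardinality order, with the $Q_G$ evaluations done in polynomial time. There is no genuinely different route here; both your write-up and the paper's brief remark are appeals to the same Bellman--Held--Karp bookkeeping applied to that recurrence.
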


This algorithm calculates the values of $\TW_G(S)$ for all sets $S$ in order of increasing size of the sets, and also stores them all in memory.
Such dynamic programming can be sped up quantumly: Ambainis et al.~\cite{ABIKPV19} have shown a $O(1.817^n)$ time and space quantum algorithm with QRAM for such problems.
Therefore, this gives the following quantum algorithm:
\begin{theorem} \label{thm:quantum-dp}
Let $G=(V,E)$ be a graph on $n$ vertices and $\chi \subseteq V$ a subset of its vertices.
Assuming the QRAM data structure, there is a bounded-error quantum algorithm that calculates $\tw(G,\chi)$ in $O^*(1.81691^{n-|\chi|})$ time and space.
\end{theorem}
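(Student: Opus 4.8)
The plan is to reuse the structure of the classical algorithm behind Theorem~\ref{thm:dp} and replace its Bellman--Held--Karp dynamic programming core by the quantum exponential dynamic programming of Ambainis et al.~\cite{ABIKPV19}. First, by Lemma~\ref{thm:suffix} we have $\tw(G,\chi) = \max\bigl(\TW_G(V-\chi),\,|\chi|-1\bigr)$, and the term $|\chi|-1$ is computed in polynomial time, so it suffices to compute $\TW_G(V-\chi)$ within the claimed bound. Write $U = V-\chi$ and $m = |U| = n-|\chi|$.

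Next I would observe that, by Lemma~\ref{thm:recur}, for every $S \subseteq U$ the value $\TW_G(S)$ satisfies
\[\TW_G(S) = \min_{v \in S} \max\bigl(\TW_G(S-\{v\}),\, |Q_G(S-\{v\},v)|\bigr),\]
with a constant base case at $S = \varnothing$. Since each step deletes a single vertex, the recursion only ever touches subsets of $U$; the local term $|Q_G(S-\{v\},v)|$ is computable in $\poly(n)$ time (e.g.\ by depth-first search, as noted just after the definition of $Q_G$) and does not itself depend on any dynamic-programming value. Hence evaluating $\TW_G(U)$ is exactly a Held--Karp-style dynamic program over the subset lattice $2^{U}$ of an $m$-element ground set with polynomial-time transitions --- which is precisely what the classical Theorem~\ref{thm:dp} does.

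I would then invoke the quantum dynamic programming of \cite{ABIKPV19}: such a dynamic program over all subsets of an $m$-element set, with $\poly$-time transitions, can be evaluated by a bounded-error quantum algorithm (using QRAM) in $O(1.817^{m})$ time and space; with the precise constant of \cite{ABIKPV19} this is $O^*(1.81691^{\,n-|\chi|})$. Composing with the maximum against $|\chi|-1$ finishes the argument. As discussed in Section~\ref{sec:prelim}, feeding bounded-error subprocedures --- both the $\poly(n)$-time evaluation of $|Q_G(\cdot,\cdot)|$ and the recursive quantum searches internal to the meta-algorithm --- into a routine that nominally expects exact inputs costs only a $\poly(n)$ factor after standard amplification, and so does not affect the exponential complexity.

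The main point to get right is the verification that $\TW_G(V-\chi)$ genuinely is a dynamic program over the lattice $2^{V-\chi}$ rather than over $2^{V}$: although the linear orderings in the definition of $\TW_G$ range over all of $\Pi(V)$, the recurrence of Lemma~\ref{thm:recur} confines the recursion to subsets of $V-\chi$, with the entire influence of $\chi$ (and of the rest of the graph) absorbed into the polynomially computable quantities $|Q_G(\cdot,\cdot)|$. Once this is in place, the statement is a direct application of the meta-algorithm of \cite{ABIKPV19}, exactly paralleling how Theorem~\ref{thm:quantum-dq} quantizes Theorem~\ref{thm:dq}.
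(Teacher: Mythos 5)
Your proposal is correct and takes essentially the same approach as the paper: reduce to computing $\TW_G(V-\chi)$ via Lemma \ref{thm:suffix}, observe that Lemma \ref{thm:recur} gives a Held--Karp-style dynamic program over the $2^{n-|\chi|}$-element subset lattice with $\poly(n)$-time transitions, and apply the $O(1.817^m)$ quantum dynamic programming of Ambainis et al.\ with the standard error-amplification remark. The paper leaves these steps implicit; you spell them out, including the (correct and worth stating) check that the recursion is confined to subsets of $V-\chi$.
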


Note that this algorithm can be used to calculate $\TWR_G(L,S)$.
Firstly, $\TWR_G(L,\varnothing) = 0$ and
\begin{align*}
    \TWR_G(L,S)
    &= \min_{v \in S} \max\left( \TWR_G(L,S - \{v\}), \TWR_G(L \cup (S - \{v\}), \{v\}) \right)\\
    &= \min_{v \in S} \max\left( \TWR_G(L,S - \{v\}), |Q_G(L \cup (S - \{v\}), v)| \right)
\end{align*}
by Lemma \ref{thm:partition}.
As already mentioned earlier, the value $|Q_G(L \cup (S - \{v\}), v)|$ can be calculated in polynomial time.
Hence this recurrence is of the same form as Lemma \ref{thm:recur}.

\begin{theorem} \label{thm:quantum-twr}
Let $G=(V,E)$ be a graph on $n$ vertices and $L, S \subseteq V$ be disjoint subsets of vertices.
Assuming the QRAM data structure, there is a bounded-error quantum algorithm that calculates $\TWR_G(L,S)$ in $O^*(1.81691^{|S|})$ time and space.
\end{theorem}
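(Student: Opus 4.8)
The plan is to observe that the recurrence displayed immediately before the statement already puts $\TWR_G(L,\cdot)$ into exactly the form handled by the quantum exponential dynamic programming of Ambainis et al.~\cite{ABIKPV19}, run over the ground set $S$ rather than all of $V$. Concretely, fix $L$ once and for all, and view $\TWR_G(L,S')$ as a function of the single argument $S' \subseteq S$. The base case $\TWR_G(L,\varnothing)=0$ is trivial, and the transition
\[\TWR_G(L,S') = \min_{v \in S'} \max\left(\TWR_G(L,S'-\{v\}),\, |Q_G(L \cup (S'-\{v\}),v)|\right)\]
expresses the value at $S'$ in terms of the values at its $|S'|$ immediate subsets $S'-\{v\}$, where each number $|Q_G(L \cup (S'-\{v\}),v)|$ is computable in $\poly(n)$ time by a depth-first search in $G$, as noted in the preliminaries. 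As remarked just before the theorem, this recurrence is of the same form as Lemma~\ref{thm:recur}, which is precisely the setting of Theorem~\ref{thm:dp} and of its quantum counterpart Theorem~\ref{thm:quantum-dp}.

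First I would make explicit that the set over which the dynamic program ranges is $S$, of cardinality $|S|$, so there are $2^{|S|}$ cells, and each cell $S'$ depends only on the $\le |S|$ cells $S'-\{v\}$; the per-transition work — evaluating $|Q_G(L \cup (S'-\{v\}),v)|$ over all $v \in S'$ and taking the relevant $\min$/$\max$ — is polynomial in $n$ and hence absorbed by the $O^*(\cdot)$ notation. Then I would invoke the $O(1.817^m)$ time-and-space quantum algorithm of \cite{ABIKPV19} for such subset dynamic programs (the same black box underlying Theorem~\ref{thm:quantum-dp}) with $m = |S|$, obtaining running time $O^*(1.81691^{|S|})$ and the same space bound, under the QRAM assumption.

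Finally I would dispatch the bounded-error issue in the usual way: the quantum dynamic program of \cite{ABIKPV19} is assembled from Grover search / quantum minimum finding applied to recursive calls whose transition evaluations here are exact and polynomial-time, and, as discussed in the preliminaries, the polynomially many recursion levels together with the exponentially many inputs are handled either by $\poly(n)$-fold amplification of each subroutine or by the bounded-error Grover search of \cite{HMDw03}, which costs only polynomial overhead and does not change the exponential complexity.

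The step I expect to need the most care is conceptual rather than computational: checking that $\TWR_G(L,\cdot)$ genuinely depends only on its set argument once $L$ is held fixed — so that the generic subset-DP speedup applies verbatim with ground set $S$, and in particular that the single quantity we want, $\TWR_G(L,S)$, is exactly the ``full-set'' value of this dynamic program. Once that is granted, the statement is a direct instantiation of the already-cited machinery.
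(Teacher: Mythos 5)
Your proposal is correct and matches the paper's own justification: the paper proves this theorem by the discussion immediately preceding the statement, namely that the recurrence $\TWR_G(L,S) = \min_{v \in S} \max\left(\TWR_G(L,S-\{v\}), |Q_G(L \cup (S-\{v\}),v)|\right)$ with base case $\TWR_G(L,\varnothing)=0$ has the same form as Lemma~\ref{thm:recur}, so the quantum dynamic programming of \cite{ABIKPV19} (as in Theorem~\ref{thm:quantum-dp}) applies over the ground set $S$. You follow exactly this route, with a bit more explicit bookkeeping about the DP table and error handling.
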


\section{Fomin's and Villanger's algorithm} \label{sec:fva}

In this section, we first describe the polynomial space treewidth algorithm of \cite{FV12}.
Afterwards, we analyze the time complexity for the classical algorithm and then for the same algorithm sped up by the quantum tools presented above.

The algorithm relies on the following, shown implicitly in the proof of Theorem 7.3.~of \cite{FV12}.

\begin{lemma} 
Let $G = (V,E)$ be a graph, and $\beta \in \left[0,\frac 1 2\right]$.
There exists an optimal tree decomposition $(X,T)$ of $G$ so that at least one of the following holds:
\begin{enumerate}[(a)] \label{thm:dich}
    \item there exists a bag $\Omega \in X$ such that $\Omega$ is a potential maximum clique and there exists a connected component $C$ of $G[V - \Omega]$ such that $|C| \leq \beta n$;
    \item there exists a bag $S \in X$ such that $S$ is a minimal separator and there exist two disjoint connected components $C_1$, $C_2$ of $G[V - S]$ such that $N(C_1) = N(C_2) = S$ and $|C_2| \geq |C_1| \geq \beta n$.
\end{enumerate}
\end{lemma}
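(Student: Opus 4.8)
The plan is to start from an arbitrary optimal tree decomposition $(X,T)$ of $G$ and massage it into one of the two prescribed forms. The standard first step is to invoke the well-known structural fact (see e.g.\ \cite{FV12}) that every graph admits an optimal tree decomposition in which every bag is either a potential maximum clique or a minimal separator; more conveniently, one may assume a \emph{small} tree decomposition (no bag contained in another) and the standard observation that for every edge $\{i,j\}\in E_T$ the bag $\chi_i\cap\chi_j$ is a minimal separator, while every "maximal" bag is a potential maximum clique. So I would fix such an optimal $(X,T)$, rooted arbitrarily, and work with it.

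Next I would analyze the sizes of the connected components that arise when we delete a bag. The key idea is a "centroid"-type argument on the tree $T$. For a bag $\Omega=\chi_i$, removing $\Omega$ from $G$ splits $G[V-\Omega]$ into connected components, and each such component is contained in the union of bags of one of the subtrees hanging off node $i$ in $T$. Walking from the root toward the subtree that contains the largest component, there is a first node $i$ where every component of $G[V-\chi_i]$ has size $\le \beta n$, \emph{unless} at some node the "heavy" child subtree already had its component of size $\ge\beta n$ — i.e.\ we pass, in one step along an edge $\{i,j\}$ of $T$, from "a component of size $>\beta n$ on the $j$-side of $\chi_i$" to "all components of $\chi_j$ of size $\le\beta n$". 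I would split into these two cases. In the first case, the stopping bag $\Omega=\chi_i$ is (or can be taken to be) a potential maximum clique with all components of $G[V-\Omega]$ of size $\le\beta n$, giving alternative~(a). In the second case, take $S=\chi_i\cap\chi_j$, which is a minimal separator; the component $C$ on the $j$-side of $S$ has $|C|\ge\beta n$, and on the $i$-side the remaining vertices $V-S-C$ also form a set of size $\ge\beta n$ (since $|C|\le$ what was available on the $j$-side and the total is $n-|S|$, and we chose the step so that the $i$-side still carries $\ge\beta n$ — here $\beta\le\frac12$ is exactly what makes "both sides $\ge\beta n$" consistent). One then has to refine "the $i$-side" and "the $j$-side" to genuine connected components $C_1,C_2$ of $G[V-S]$ with $N(C_1)=N(C_2)=S$: this uses the standard fact that a minimal separator $S$ has at least two \emph{full} components (components $C$ with $N(C)=S$), and that the full components on either side of the edge $\{i,j\}$ give the desired $C_1,C_2$, after possibly re-choosing the splitting edge so that the two full components land on opposite sides. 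Relabel so $|C_2|\ge|C_1|\ge\beta n$, which is alternative~(b).

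The bookkeeping I expect to be the main obstacle is making the second case fully rigorous: one must be careful that the component picked on the $j$-side is a \emph{full} component (neighborhood exactly $S$, not a proper subset), that there is also a full component on the $i$-side, and that \emph{both} of them have size $\ge\beta n$. The size lower bound for the $i$-side component is the delicate point — a priori the $i$-side of the edge could split into several small components none of which individually reaches $\beta n$. Handling this cleanly is presumably why the statement is phrased with the freedom to choose \emph{some} optimal decomposition: one argues that if no good splitting edge existed with both full components large, then the walk toward the heaviest component never "crosses over", so it must terminate at a bag satisfying~(a) instead — i.e.\ the two cases are genuinely exhaustive, and the $\beta\le\frac12$ hypothesis is what glues the dichotomy together. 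I would organize the write-up as: (i) reduce to a nice optimal decomposition; (ii) define the walk and the stopping rule; (iii) case~(a) is immediate at the stopping bag; (iv) in case~(b), extract the two full components across the crossing edge and verify the size bounds using $\beta\le\frac12$.
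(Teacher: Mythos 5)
The paper does not itself prove this lemma: it is introduced with the remark that it is ``shown implicitly in the proof of Theorem~7.3 of~\cite{FV12}'', so there is no in-paper proof to compare against. Evaluating your sketch on its own terms, the gap you flag is genuine and your proposed patch does not close it. The core difficulty is your termination condition for case~(a): you stop the walk at a node $i$ where \emph{every} component of $G[V-\chi_i]$ has size at most $\beta n$, but the lemma only asks for \emph{some} small component. For $\beta$ well below $\tfrac{1}{2}$ no bag need have all components small (already on a long path, every bag has a component of size roughly $n/2$ on one side), so the walk never halts in your case~(a). Your fallback then requires producing, at the ``crossing'' edge $\{i,j\}$, a full component of $S=\chi_i\cap\chi_j$ of size at least $\beta n$ on the $i$-side, and you correctly observe that having $\ge\beta n$ vertices on that side does not imply any single full component of that size. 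The closing sentence --- ``if no good splitting edge existed\dots\ the walk must terminate at a bag satisfying~(a)'' --- is precisely the claim that would need a proof and is not supplied; combined with the too-strong stopping rule it leaves the dichotomy unestablished.

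In fact no centroid walk is needed, and the $\beta\le\tfrac{1}{2}$ hypothesis plays no balancing role at all. Take an optimal clique-tree decomposition in which every bag is a potential maximum clique and every tree-edge separator $S=\chi_i\cap\chi_j$ is a minimal separator with a full component $D_i$ of $G[V-S]$ on the $i$-side and $D_j$ on the $j$-side. If some bag $\Omega$ has a component of $G[V-\Omega]$ of size at most $\beta n$, this is case~(a). Otherwise every bag sees only components of size $>\beta n$; pick \emph{any} tree edge $\{i,j\}$. Since $D_j$ is connected, disjoint from $\chi_i$, and $N(D_j)=S\subseteq\chi_i$, the set $D_j$ is itself a connected component of $G[V-\chi_i]$, hence $|D_j|>\beta n$; symmetrically $|D_i|>\beta n$. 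This is case~(b) with $\{C_1,C_2\}=\{D_i,D_j\}$. The hypothesis $\beta\le\tfrac{1}{2}$ only ensures that case~(b) is not vacuously impossible (two disjoint sets of size $\ge\beta n$ together with a nonempty $S$ must fit inside $n$ vertices); it is not used to drive the argument. The ``delicate size lower bound on the $i$-side'' that you anticipated as the obstacle is thereby sidestepped entirely: one never compares a whole side of the tree to $\beta n$, one merely recognizes a full component of the edge separator as a connected component of the opposite bag.
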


The idea of the algorithm then is to try out all possible potential maximum cliques and minimal separators that conform to the conditions of this lemma, and for each of these sets, to find an optimal tree decomposition of $G$ given that the examined set is a bag of the decomposition using the algorithm from Theorem \ref{thm:dq}.
The treewidth of $G$ then is the minimum width of all examined decompositions.

The potential maximum clique generation is based on the following lemma.
\begin{lemma}[Lemma 7.1.~in \cite{FV12}\footnote{The original lemma gives an upper bound if the size $\Omega$ is not fixed, but our statement follows from their proof. We need to fix $|\Omega|$ because in the quantum algorithms, Grover's search will be called for fixed $|\Omega|$ and $|C|$.}] \label{thm:pmc}
Let $G = (V,E)$ be a graph.
The number of maximum potential cliques $\Omega$ of size $p$ such that there exists a connected component $C$ of $G[V - \Omega]$ of size $c$ is at most $n \binom{n-c}{p-1}$.
The set of all these cliques can also be generated in time $O^*(\binom{n-c}{p-1})$.
\end{lemma}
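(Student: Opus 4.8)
The plan is to combine the structural characterization of potential maximal cliques (see \cite{FV12}) with the listing algorithm of Lemma~\ref{thm:comblem}. The core idea is that a PMC $\Omega$ together with a chosen ``pivot'' vertex $a\in\Omega$ can be reconstructed from a single connected set $C'$ whose neighbourhood is exactly $\Omega\setminus\{a\}$; the combinatorial lemma then simultaneously bounds the number of, and lists, these connected sets.

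First I would fix a PMC $\Omega$ with $|\Omega|=p$ and a connected component $C$ of $G[V-\Omega]$ with $|C|=c$. Since $\Omega$ is a PMC it has no full component, so $N(C)\subsetneq\Omega$, and I may pick a pivot $a\in\Omega\setminus N(C)$. Because $a\notin N(C)$, the component $C$ survives in $G[(V-\Omega)\cup\{a\}]$; letting $C'$ be the connected component of $a$ in that graph we get $C'\cap C=\varnothing$. The key structural claim is $N(C')=\Omega\setminus\{a\}$: the inclusion $\Omega\setminus\{a\}\subseteq N(C')$ uses cliquishness of $\Omega$ (any $z\in\Omega\setminus\{a\}$ is either adjacent to $a$, or shares with $a$ a component of $G[V-\Omega]$ in its neighbourhood, which is then absorbed into $C'$), while the reverse inclusion holds because $C'$ already swallows every component of $G[V-\Omega]$ it touches. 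Hence $\Omega=N(C')\cup\{a\}$, so $\Omega$ is determined by the pair $(a,C')$, where $C'$ is connected, contains $a$, has $|N(C')|=p-1$, and -- being disjoint from both $C$ and $N(C')$ -- has size at most $n-c-p+1$.

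For the counting bound I would fix $a$: by Lemma~\ref{thm:comblem} the number of connected sets $C'\ni a$ with $|C'|=k$ and $|N(C')|=p-1$ is at most $\binom{k-1+p-1}{k-1}$, and summing over $k\le n-c-p+1$ collapses, by the hockey-stick identity, to a polynomial multiple of $\binom{n-c}{p-1}$; multiplying by the $n$ choices of $a$ yields the stated $n\binom{n-c}{p-1}$. For the generation algorithm, for every $a\in V$ and every admissible size $k$ I would run the listing procedure of Lemma~\ref{thm:comblem} to enumerate the connected sets $C'\ni a$ with $|C'|=k$ and $|N(C')|=p-1$ in time $O^*\!\left(\binom{k-1+p-1}{k-1}\right)$, output the candidate $N(C')\cup\{a\}$, and discard it unless it is actually a PMC having a component of size $c$ (both tests run in polynomial time); the running times telescope, again by hockey-stick, to $O^*\!\left(\binom{n-c}{p-1}\right)$.

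The main obstacle is the structural identity $N(C')=\Omega\setminus\{a\}$, which is where the definition of a potential maximal clique (no full component, cliquishness) is genuinely used; once that is in hand, the counting and the listing are routine manipulations of Lemma~\ref{thm:comblem} and of binomial coefficients, and any polynomial overheads (the precise constant in front of $\binom{n-c}{p-1}$, reconciling with the bookkeeping of \cite{FV12}) are immaterial for the exponential analysis we care about.
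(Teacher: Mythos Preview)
Your proposal is correct and follows precisely the argument of Fomin and Villanger that the paper is citing (the paper gives no proof of its own here, only the reference and footnote): encode a potential maximal clique $\Omega$ via a pivot $a\in\Omega\setminus N(C)$ and the connected set $C'$ containing $a$ in $G[(V-\Omega)\cup\{a\}]$, use the PMC characterization (no full component, cliquishness) to obtain $N(C')=\Omega\setminus\{a\}$, and then invoke Lemma~\ref{thm:comblem} to count and list the candidates. The hockey-stick summation actually yields $\binom{n-c}{p}$ rather than $\binom{n-c}{p-1}$, but as you correctly note this is only a polynomial discrepancy and is irrelevant for the $O^*$ analysis that the paper needs.
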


For the minimal separators, suppose that the size of $S$ is fixed, denote it by $s$.
Note that since $C_1$ in Lemma \ref{thm:dich} is a connected component such that $N(C_1) = S$, then instead of generating minimal separators, we can generate the sets of vertices $C$ with neighborhood size equal to $s$.
The set of sets generated in this way contains all of the minimal separators of size $s$ that we are interested in, and for those sets that are not, the fixed-bag treewidth algorithm will still find some tree decomposition of the graph, albeit not an optimal one.
The generation is done using Lemma \ref{thm:comblem}: for a fixed size $c$ of $C$, the number of such $C$ with exactly $s$ neighbors is at most $n\binom{c+s}{c}$ (the factor of $n$ comes from trying each of $n$ vertices as the fixed vertex $v \in B$).
The algorithm generating all such $C$ requires time $O^*\left(\binom{c+s}{c}\right)$.
For a set $C$, we then find an optimal tree decomposition of $G$ containing $N(C)$ as a fixed bag using the algorithm from Theorem \ref{thm:dq}.
In this way we work through all $c$ from $\beta n$ to $n-s-|C_2| \leq (1-\beta)n-s$.

\begin{algorithm} 
\caption{The polynomial space algorithm for treewidth.}
\label{alg:classic-dq}
\begin{enumerate}
    \item For $c$ from $0$ to $\beta n$ and $p$ from $1$ to $n-c$ generate the set of potential maximal cliques $\Omega$ of size $p$ with a connected component of size $c$ using Lemma \ref{thm:pmc}. \label{itm:stage1}
    For each of these $\Omega$, find $\tw(G,\Omega)$ using Theorem \ref{thm:dq}.
    \item For $s$ from $1$ to $(1-2\beta)n$ and for $c$ from $\beta n$ to $(1-\beta)n-s$ generate the set of subsets $C$ such that $|C| = c$ and $N(C) = s$ using Lemma \ref{thm:comblem}. \label{itm:stage2}
    Let $S = N(C)$; then $\tw(G,S)$ is equal to the maximum of $\tw(G[S \cup C],S)$ and $\tw(G[V - C],S)$ by Lemma \ref{thm:tw-comp}.
    Use the algorithm from Theorem \ref{thm:dq} to compute these values.
    \item Output the minimum width of all examined tree decompositions.
\end{enumerate}
\end{algorithm}

\subsection{Classical complexity}

\begin{theorem}[Theorem 7.3.~in \cite{FV12}] \label{thm:tw-poly}
Algorithm \ref{alg:classic-dq} computes the treewidth of a graph with $n$ vertices in $O^*(2.61508^n)$ time and polynomial space.
\end{theorem}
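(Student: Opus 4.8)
The plan is to prove correctness and the running-time bound separately, the latter being the real content.

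For correctness I would invoke Lemma \ref{thm:dich} with the algorithm's parameter $\beta$ to get an optimal tree decomposition $(X,T)$ of one of the two types. In case (a) the bag $\Omega$ is a potential maximal clique of some size $p$ whose complement $G[V-\Omega]$ has a component of size $c\le\beta n$, so $\Omega$ is among the sets generated in Step \ref{itm:stage1} for the pair $(c,p)$ by Lemma \ref{thm:pmc}; since $(X,T)$ certifies $\tw(G)=\tw(G,\Omega)$, the value computed there equals $\tw(G)$. In case (b) I would set $C=C_1$, a connected set with $|C|=c\in[\beta n,(1-\beta)n-s]$ and $|N(C)|=s\le(1-2\beta)n$, hence generated in Step \ref{itm:stage2} by Lemma \ref{thm:comblem}; with $S=N(C)$, Lemma \ref{thm:tw-comp} gives $\tw(G,S)=\max(\tw(G[S\cup C],S),\tw(G[V-C],S))$, so again the computed value equals $\tw(G)$. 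Since every value the algorithm examines has the form $\tw(G,\chi)\ge\tw(G)$, outputting the minimum over all of them returns exactly $\tw(G)$.

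For the running time I would write the total as the sum of the costs of the two steps, each a sum of $\poly(n)$ terms indexed by the loop variables, so it is enough to bound the largest term. For Step \ref{itm:stage1}, fixing $(c,p)$: by Lemma \ref{thm:pmc} there are at most $n\binom{n-c}{p-1}$ cliques, generated in $O^*(\binom{n-c}{p-1})$ time, and for each clique $\Omega$ computing $\tw(G,\Omega)$ reduces via Lemma \ref{thm:tw-comp} to running Theorem \ref{thm:dq} on the subgraphs $G[C'\cup\Omega]$ for the components $C'$ of $G[V-\Omega]$, at cost $O^*(4^{|C'|})$ each; organizing the enumeration by the largest such component size $c\le\beta n$ makes this term $O^*(\binom{n-c}{p-1}4^{c})$. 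For Step \ref{itm:stage2}, fixing $(s,c)$: by Lemma \ref{thm:comblem} there are at most $n\binom{c+s}{c}$ sets, generated in $O^*(\binom{c+s}{c})$ time, and for each $C$ the two calls of Theorem \ref{thm:dq}---on $G[S\cup C]$ ($s+c$ vertices, fixed bag of size $s$) and on $G[V-C]$ ($n-c$ vertices, fixed bag of size $s$)---cost $O^*(4^{c}+4^{n-c-s})=O^*(4^{\max(c,\,n-c-s)})$, making this term $O^*(\binom{c+s}{c}4^{\max(c,\,n-c-s)})$ over $\beta n\le c\le(1-\beta)n-s$ and $s\le(1-2\beta)n$.

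I would then bound both binomials with the entropy estimate (Theorem \ref{thm:entropy}), rescale the loop variables to $[0,1]$, and maximize. For Step \ref{itm:stage1} this is a maximization of the resulting $\be(\cdot)$-expression over $c/n\in[0,\beta]$ and $p/n$, whose optimum (with $p-1$ near $(n-c)/2$ and $c$ as large as allowed) gives a bound $O^*(2^{f_1(\beta)n})$ increasing in $\beta$; for Step \ref{itm:stage2} the optimum, found after splitting on which branch of $\max(c,\,n-c-s)$ is active and balancing it against the entropy term, gives $O^*(2^{f_2(\beta)n})$ decreasing in $\beta$. Choosing $\beta$ so that $f_1(\beta)=f_2(\beta)$ balances the two steps, and evaluating the common value yields $2^{f_1(\beta)}=2.61508\ldots$, hence the claimed $O^*(2.61508^n)$ bound; polynomial space is immediate because Lemma \ref{thm:pmc}, Lemma \ref{thm:comblem} and Theorem \ref{thm:dq} all run in polynomial space and the nested loops store only $\poly(n)$ bits. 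The main obstacle I anticipate is the inner fixed-bag cost in Step \ref{itm:stage1}: applying Theorem \ref{thm:dq} to all of $G$ costs $O^*(4^{n-|\Omega|})$, which is too slow when $\Omega$ is small, so the analysis must use Lemma \ref{thm:tw-comp} and arrange the enumeration so that the component size bounded by Lemma \ref{thm:dich}(a) is what governs the cost; after that the remaining work is the two constrained optimizations and the determination of $\beta$, which are elementary but require care with the feasibility constraints and the regime change in $\max(c,\,n-c-s)$.
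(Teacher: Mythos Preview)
Your proposal follows the same two-stage cost analysis and numerical balancing of $\beta$ as the paper, and your Stage~\ref{itm:stage2} bound $O^*\bigl(\binom{c+s}{c}4^{\max(c,\,n-c-s)}\bigr)$ matches the paper's starting expression before it substitutes $d=n-c-s$ and reduces to $\max_d\binom{n-d}{\beta n}4^d$. For Stage~\ref{itm:stage1} you correctly anticipate the obstacle: the paper actually writes the per-clique cost as $4^{n-p}$ (Theorem~\ref{thm:dq} on all of $G$), but its subsequent simplification to $2^{n-c}4^c$ only goes through with the per-component cost $4^c$ you propose via Lemma~\ref{thm:tw-comp}, so your instinct here is exactly what is needed.

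One point to make explicit. Taking $c$ to be the \emph{largest} component size with $c\le\beta n$ requires that in case~(a) \emph{every} component of $G[V-\Omega]$ has size at most $\beta n$, not merely one as Lemma~\ref{thm:dich}(a) is phrased in this paper; otherwise the enumeration in Stage~\ref{itm:stage1} either misses the optimal $\Omega$ (if you filter to small-max PMCs) or incurs cost $4^{c_{\max}}$ with $c_{\max}$ unbounded (if you do not). The stronger statement is what Fomin and Villanger actually establish for the balanced potential maximal clique, so you should invoke that version rather than the existential one recorded here.
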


\begin{proof}
The algorithms from Lemma \ref{thm:comblem} and Theorem \ref{thm:dq} both require polynomial space, hence it holds also for Algorithm \ref{alg:classic-dq}.

Now we analyze the time complexity; Stage \ref{itm:stage1} of the algorithm requires time
\begin{equation} \label{eq:cmpl1}
O^*\left(\sum_{c=0}^{\beta n} \sum_{p=1}^{n-c} \binom{n-c}{p-1} 4^{n-p}\right)
= O^*\left(\sum_{c=0}^{\beta n} 2^{n-c} 4^c\right)
= O^*\left(\max_{c=0}^{\beta n} 2^{n+c}\right)
= O^*\left(2^{(1+\beta)n}\right).
\end{equation}
Stage \ref{itm:stage2} of the algorithm requires time
\[O^*\left(\sum_{s=1}^{(1-2\beta)n} \sum_{c=\beta n}^{(1-\beta)n-s} \binom{c+s}{c} \max\left(4^{c},4^{n-c-s}\right)\right).\]
Note that we can assume that $C=C_1$ and $V - C - S$ contains $C_2$ (we can check this in polynomial time by finding the connected components of $G[V - S]$); since $|C_2| \geq |C_1|$, we can assume that $n-c-s \geq c$.
Hence the complexity becomes
\[O^*\left(\sum_{s=1}^{(1-2\beta)n} \sum_{c=\beta n}^{(1-\beta)n-s} \binom{c+s}{c} 4^{n-c-s}\right)
= O^*\left(\max_{s=1}^{(1-2\beta)n} \max_{c=\beta n}^{(1-\beta)n-s} \binom{c+s}{c} 4^{n-c-s}\right).\]
Now denote $d=n-c-s$, then $c+s=n-d$ and we can rewrite the complexity as
\[O^*\left(\max_{d=\beta n}^{(1-\beta)n} \max_{c=\beta n}^{n-d} \binom{n-d}{c} 4^d\right).\]
For any $d$, the maximum of $\binom{n-d}{c}$ over $c \geq \beta n$ can be one of two cases: if $\beta n \leq \frac{n-d}{2}$, it is equal to $\Theta^*(2^{n-d})$; otherwise it is equal to $\binom{n-d}{\beta n}$.
In the first case, for the interval $c \in \left[\beta n, \frac{n-d}{2}\right]$, the function being maximized becomes $2^{n-d}4^d=2^{n+d}$.
Since this function is increasing in $d$, its maximum is covered by the second case with the smallest $c$ such that $c = \frac{n-d}{2}$ (in case $\beta n \leq \frac{n-d}{2}$).
Therefore, the complexity of Stage  \ref{itm:stage2} of the algorithm becomes
\begin{equation} \label{eq:cmpl2}
O^*\left(\max_{d=\beta n}^{(1-\beta)n} \binom{n-d}{\beta n} 4^d\right).
\end{equation}
Now we are searching for the optimal $\beta \in \left[0,\frac{1}{2}\right]$ that balances the complexities (\ref{eq:cmpl1}) and (\ref{eq:cmpl2}).
We solve it numerically and obtain $\beta \approx 0.38685$, giving complexity $O^*(2.61508^n)$.
\end{proof}

\subsubsection{A time-space tradeoff}

One might ask whether replacing the $O^*(4^n)$ divide \& conquer algorithm from Theorem \ref{thm:dq} with the $O^*(2^n)$ dynamic programming algorithm from Theorem \ref{thm:dp} in Algorithm \ref{alg:classic-dq} can give any interesting complexity.
Indeed, we can show the following previously unexamined classical time-space tradeoff.

\begin{theorem} \label{thm:tradeoff}
The treewidth of a graph with $n$ vertices can be computed in $O^*(2^n)$ time and $O^*\left(\sqrt{2^n}\right)$ space.
\end{theorem}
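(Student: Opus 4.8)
The plan is to re-use Algorithm~\ref{alg:classic-dq} unchanged except for replacing every invocation of the divide-and-conquer fixed-bag routine of Theorem~\ref{thm:dq} (which costs $O^*(4^{n-|\chi|})$ time and polynomial space) by the dynamic-programming routine of Theorem~\ref{thm:dp} (which costs $O^*(2^{n-|\chi|})$ time and space), and, before each such invocation, to split the instance into connected components via Lemma~\ref{thm:tw-comp} so that the dynamic program only ever runs on a graph $G'[C'\cup\chi']$ for a single component $C'$, at cost $O^*(2^{|C'|})$ time and space. The time bound is obtained by re-running the analysis in the proof of Theorem~\ref{thm:tw-poly} with every factor $4^{x}$ replaced by $2^{x}$: in Stage~\ref{itm:stage2} the bottleneck $\binom{c+s}{c}\max(4^{c},4^{n-c-s})$ becomes $\binom{c+s}{c}\max(2^{c},2^{n-c-s})$, which after the substitution $d=n-c-s$ (with $n-c-s\ge c\ge\beta n$) is at most $\max_{d}\binom{n-d}{\beta n}2^{d}\le 2^{n}$, since $\binom{n-d}{\beta n}\le 2^{n-d}$ in the relevant range. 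Stage~\ref{itm:stage1} needs a little more care: the naive sum $\sum_{c,p}\binom{n-c}{p-1}2^{n-p}$ over all enumerated potential maximal cliques is too large, so one instead uses the component split together with the fact (below) that cliques leaving an oversized component can be discarded, after which the Stage~\ref{itm:stage1} sum also collapses to $O^*(2^{n})$.

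For the space, the exponential cost is entirely the largest live dynamic-programming table, of size $O^*(2^{|C'|})$ — the generation procedures of Lemmas~\ref{thm:pmc} and~\ref{thm:comblem}, and the divide-and-conquer routine where it is retained, are polynomial-space. The plan is to choose the dichotomy parameter $\beta$ and organize the stages so that every component $C'$ fed to the dynamic program has $|C'|\le n/2$. For a connected input graph one can take $\beta=\tfrac12$: branch~(b) of Lemma~\ref{thm:dich} is then vacuous (it would require two disjoint connected components of size $\ge n/2$ outside one bag), so only Stage~\ref{itm:stage1} runs, and among the enumerated potential maximal cliques there is one, $\Omega^{*}$, that is a bag of an optimal tree decomposition and simultaneously a balanced separator of $G$; for $\Omega^{*}$ all components of $G[V-\Omega^{*}]$ have size $\le n/2$, so $\tw(G)=\tw(G,\Omega^{*})$ is recovered from dynamic programming on components of size $\le n/2$, while for every other enumerated clique $\Omega$ we have $\tw(G,\Omega)\ge\tw(G)$ and so those leaving an oversized component may simply be skipped (disconnected inputs reduce to their components). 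This caps the peak space at $O^*(2^{n/2})=O^*\!\left(\sqrt{2^{n}}\right)$.

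The step I expect to be the main obstacle is making the $O^*(2^{n})$ time bound and the $|C'|\le n/2$ space cap hold together at one choice of $\beta$. The balanced-separator restriction does not reduce the number of enumerated cliques — Lemma~\ref{thm:pmc} still yields up to $\approx 2^{n}$ of them — so charging $O^*(2^{n/2})$ to each would give $O^*(2^{3n/2})$ time. Recovering $O^*(2^{n})$ requires the finer accounting that the cost $2^{|C'|}$ of dynamic programming on a large component of $G[V-\Omega]$ is offset by the correspondingly small number of cliques $\Omega$ that leave a component of that size, which is exactly the geometric cancellation making the Stage sums collapse; getting this accounting and the space cap to be compatible is the delicate point, and everything else is a routine transcription of the proof of Theorem~\ref{thm:tw-poly} with $4$ replaced by $2$.
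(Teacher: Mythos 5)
Your high-level plan — swap the divide-and-conquer fixed-bag routine (Theorem~\ref{thm:dq}) for the Held--Karp dynamic program (Theorem~\ref{thm:dp}) inside Algorithm~\ref{alg:classic-dq}, use Lemma~\ref{thm:tw-comp} to run the DP per connected component, and take $\beta=\tfrac12$ — is exactly the paper's plan. But there is a genuine gap in how you account for Stage~\ref{itm:stage1}, and the detour you take to patch it (discarding PMCs that leave an oversized component and invoking a balanced-separator PMC) is neither what the paper does nor does it actually close the gap.

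The key misreading is in the per-PMC cost. In Stage~\ref{itm:stage1} the fixed-bag solver is never run on all of $G[V-\Omega]$: for the pair $(\Omega,C)$ enumerated at parameters $(c,p)$, the DP is run on $G[C\cup\Omega]$ with $\Omega$ as the fixed bag, which by Theorem~\ref{thm:dp} costs $O^*(2^{c})$ time and space, not $O^*(2^{n-p})$. You can see this in the proof of Theorem~\ref{thm:tw-poly}: the step
\[
O^*\left(\sum_{c=0}^{\beta n}\sum_{p=1}^{n-c}\binom{n-c}{p-1}\,4^{n-p}\right)=O^*\left(\sum_{c=0}^{\beta n}2^{\,n-c}\,4^{c}\right)
\]
only holds if the inner summand is really $\binom{n-c}{p-1}\cdot 4^{c}$ (so that $\sum_{p}\binom{n-c}{p-1}=2^{\,n-c}$ factors out); with a literal $4^{n-p}$ the inner sum is $\Theta^{*}(5^{\,n-c})$ and the chain would be false. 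Reading the cost as $2^{c}$ after the DP replacement, Stage~\ref{itm:stage1} immediately gives time $O^*\bigl(\sum_{c}2^{\,n-c}\cdot 2^{c}\bigr)=O^*(2^{n})$ and space $O^*\bigl(\max_{c\le\beta n}2^{c}\bigr)=O^*(2^{\beta n})$. Stage~\ref{itm:stage2} is the same: with $d=n-c-s\le(1-\beta)n$ the DP cost is $2^{d}$, so time is $O^*\bigl(\max_{d}\binom{n-d}{\beta n}2^{d}\bigr)\le O^*(2^{n})$ and space is $O^*(2^{(1-\beta)n})$. Taking $\beta=\tfrac12$ balances the two space bounds at $O^*(\sqrt{2^{n}})$, done. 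No balanced-separator lemma, no skipping of PMCs, and no ``finer accounting'' are needed.

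This also means the obstacle you flag at the end is real for your variant and not for the paper's. You correctly compute that if you charge $O^*(2^{n/2})$ to each of up to $\approx 2^{n}$ enumerated cliques you get $O^*(2^{3n/2})$; that is exactly why one must not run the DP on the whole complement (or on the largest component). The paper never does, so the geometric cancellation you are hoping for is automatic — it is $\sum_{c}2^{n-c}\cdot 2^{c}$, not something that needs to be recovered by a new structural lemma. As a secondary point, your balanced-separator claim (that among the enumerated PMCs there is an optimal-width bag $\Omega^{*}$ all of whose components have size $\le n/2$) is plausible but is not stated anywhere in the paper, is not needed, and would require its own proof; likewise the correctness of discarding PMCs with an oversized component is asserted but not argued. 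Drop both, read the Stage~\ref{itm:stage1} cost as $2^{c}$, and the proof becomes the paper's four-line calculation.
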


\begin{proof}
First, we look at the time complexity.
The time complexity of Stage \ref{itm:stage1} now is equal to
\[O^*\left(\sum_{c=0}^{\beta n} 2^{n-c} 2^c\right) = O^*(2^n).\]
The time complexity of Stage \ref{itm:stage2} is equal to
\[O^*\left(\max_{d=\beta n}^{(1-\beta)n}  \binom{n-d}{\beta n} 2^d \right) = O^*\left(2^{n-d}2^d\right) = O^*(2^n).\]
The space complexity of Stage \ref{itm:stage1} is equal to
\[O^*\left(\max_{c=0}^{\beta n} 2^c \right) = O^*\left(2^{\beta n}\right).\]
The space complexity of Stage \ref{itm:stage2} is equal to
\[O^*\left(\max_{d=\beta n}^{(1-\beta)n} 2^d \right) = O^*\left(2^{(1-\beta)n}\right).\]
Therefore, the time complexity of this algorithm is $O^*(2^n)$ and, taking $\beta = \frac{1}{2}$, the space complexity is equal to $O^*\left(\sqrt{2^n}\right)$.
\end{proof}

We can compare this to the existing treewidth algorithms.
The most time-efficient treewidth algorithm runs in time and space $O^*(1.755^n)$ \cite{FV12}, which is more than $O^*\left(\sqrt{2^n}\right)$.
The polynomial space $O^*(2.616^n)$ algorithm, of course, is slower than $O^*(2^n)$.
The time-space tradeoffs for permutation problems from \cite{KP10} give $TS \gtrsim 3.93$, where $T \geq 2$ and $\sqrt{2} \leq S \leq 2$ are the time and space complexities (bases of the exponent $n$) of the algorithm.
In this case, $TS = 2^{\frac{3}{2}} \approx 2.83$, $T = 2$ and $S=\sqrt{2}$.
Therefore, Theorem \ref{thm:tradeoff} fully subsumes their tradeoff for treewidth.
We also note that we cannot ``tune'' our tradeoff directly for less time and more space, since the first stage with $c=0$ already requires $\Theta^*(2^n)$ time for any $\beta$.

\subsection{Quantum complexity}

Now we are ready to examine the quantum versions of the algorithm.
First, we consider the analogue of Algorithm \ref{alg:classic-dq} with its procedures sped up quadratically using Grover's search.

\begin{theorem} \label{thm:qtwdq}
There is a bounded-error quantum algorithm that finds the treewidth of a graph on $n$ vertices in $O(1.61713^n)$ time and polynomial space.
\end{theorem}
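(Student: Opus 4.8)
The plan is to run the same Algorithm \ref{alg:classic-dq}, but with every classical subroutine replaced by its quadratically sped-up quantum analogue: the potential-maximal-clique generation of Lemma \ref{thm:pmc} and the connected-set generation of Lemma \ref{thm:comblem} are both replaced by the Grover-based minimum-finding of Lemma \ref{thm:quantum-comb} (and its obvious analogue for Lemma \ref{thm:pmc}, which has the same branching structure), while the fixed-bag divide \& conquer of Theorem \ref{thm:dq} is replaced by the quantum version of Theorem \ref{thm:quantum-dq}. Since quantum minimum finding nests (using the bounded-error Grover implementation of \cite{HMDw03} to avoid error accumulation, as in the proof of Theorem \ref{thm:quantum-dq}), composing these gives, for each choice of size parameters, a running time that is the square root of the corresponding classical term. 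The polynomial error-boosting overhead from bounded-error subprocedures contributes only a $\poly(n)$ factor, hidden in $O^*$.

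Concretely, Stage \ref{itm:stage1} becomes
\[
O^*\left(\max_{c=0}^{\beta n} \max_{p=1}^{n-c} \sqrt{\binom{n-c}{p-1}}\cdot 2^{n-p}\right)
= O^*\left(\max_{c=0}^{\beta n} \sqrt{2^{n-c}}\cdot 2^c\right)
= O^*\left(2^{\left(\frac12 + \frac{\beta}{2}\right)n}\right),
\]
using that $\sum_p \binom{n-c}{p-1}2^{2(n-p)}$ is dominated by its largest term and that the sum $\sum_p \binom{n-c}{p-1} x^{-p}$-type identity still collapses the binomial sum. For Stage \ref{itm:stage2}, repeating the substitution $d = n-c-s$ exactly as in the proof of Theorem \ref{thm:tw-poly} (and the same observation that $n-c-s \geq c$, so the $\max(4^c,4^{n-c-s})$ becomes $4^{n-c-s}$, whose square root is $2^d$) yields
\[
O^*\left(\max_{d=\beta n}^{(1-\beta)n} \max_{c=\beta n}^{n-d} \sqrt{\binom{n-d}{c}}\cdot 2^d\right)
= O^*\left(\max_{d=\beta n}^{(1-\beta)n} \sqrt{\binom{n-d}{\beta n}}\cdot 2^d\right),
\]
where the reduction to $c=\beta n$ needs the same case analysis as before: for $c$ in the range where $\binom{n-d}{c}=\Theta^*(2^{n-d})$ the maximized quantity is $\sqrt{2^{n-d}}\cdot 2^d = 2^{\left(\frac12+\frac{d}{2n}\cdot 2\right)n}$... more precisely $2^{n/2 + d/2 \cdot ...}$, which is increasing in $d$ and hence absorbed at $c=\frac{n-d}{2}$, matching the boundary of the other case. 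Balancing this against the Stage \ref{itm:stage1} bound $2^{(1+\beta)n/2}$ over $\beta \in [0,\frac12]$ is a one-dimensional numerical optimization identical in form to the one in Theorem \ref{thm:tw-poly} but with all exponents halved; solving it numerically should give the claimed $\beta$ and the bound $O(1.61713^n)$.

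The main obstacle I expect is purely bookkeeping rather than conceptual: one must verify that Lemma \ref{thm:quantum-comb} (and the analogous statement for Lemma \ref{thm:pmc}) applies in the nested setting where the "leaf computation" $\mathcal A$ is itself the quantum fixed-bag algorithm from Theorem \ref{thm:quantum-dq} with its own bounded error, and that Lemma \ref{thm:tw-comp} lets us decompose $\tw(G,S)$ into $\max(\tw(G[S\cup C],S), \tw(G[V-C],S))$ each computable by that algorithm on a graph of the appropriate size — so the exponent in the leaf cost is $2^{\max(c,\,n-c-s)} = 2^{n-c-s}$ after using $|C_2|\geq|C_1|$. The other delicate point is the collapse of the Grover'd binomial sums: one should check that $\sqrt{\binom{n-c}{p-1}}\cdot 2^{n-p}$ summed over $p$ is still $O^*$ of its maximum and that its maximum over $p$ is $O^*(\sqrt{2^{n-c}}\cdot 2^c)$; this follows because $\max_p \binom{n-c}{p-1}2^{2(n-p)} = O^*(2^{(n-c)+2c})$ by the same argument as in \eqref{eq:cmpl1}, and taking square roots commutes with the (polynomially many terms in the) max. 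Once these checks are in place the numerical optimization is routine.
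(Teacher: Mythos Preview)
Your proposal is correct and follows essentially the same approach as the paper. The paper's own proof is in fact much terser: it simply observes that since every exponential subprocedure (the enumeration via Lemmas \ref{thm:comblem} and \ref{thm:pmc}, and the fixed-bag computation via Theorem \ref{thm:dq}) is sped up quadratically, the total running time is $O(\sqrt{2.61508^n}) = O(1.61713^n)$, with the same optimal $\beta$ as in Theorem \ref{thm:tw-poly} because balancing $\sqrt{f(\beta)}$ against $\sqrt{g(\beta)}$ has the same optimum as balancing $f(\beta)$ against $g(\beta)$; your explicit re-derivation of the stage-by-stage bounds is unnecessary but not wrong.
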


\begin{proof}
In Algorithm \ref{alg:classic-dq}, we replace the algorithms from Lemmas \ref{thm:comblem} and \ref{thm:pmc} with the quantum algorithm from Lemma \ref{thm:quantum-comb}; the algorithm from Theorem \ref{thm:dq} is replaced with the algorithm from Theorem \ref{thm:quantum-dq}.
Since all exponential subprocedures now are sped up quadratically, the time complexity becomes
\[O\left(\sqrt{2.61508^n}\right) = O(1.61713^n).\]
The space complexity is still polynomial, since Grover's search additionally uses only polynomial space.
\end{proof}

Similarly, we can replace the algorithm from Theorem \ref{thm:dp} with the quantum dynamic programming algorithm from Theorem \ref{thm:quantum-dp}:

\begin{theorem} \label{thm:qtwdp}
Assuming the QRAM data structure, there is a bounded-error quantum algorithm that finds the treewidth of a graph on $n$ vertices in $O(1.55374^n)$ time and $O(1.45195^n)$ space.
\end{theorem}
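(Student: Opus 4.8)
The plan is to reuse Algorithm \ref{alg:classic-dq} verbatim, but replace each exponential subprocedure by its best available quantum counterpart: the set-generation procedures of Lemmas \ref{thm:comblem} and \ref{thm:pmc} are replaced by the Grover-based quantum minimum-finding of Lemma \ref{thm:quantum-comb}, and the fixed-bag treewidth computation of Theorem \ref{thm:dq} is replaced by the quantum dynamic programming of Theorem \ref{thm:quantum-dp}, which runs in $O^*(1.81691^{n-|\chi|})$ time and space. As in the proof of Theorem \ref{thm:qtwdq}, the bounded-error subprocedures are boosted by $\poly(n)$ repetition so that the nested Grover searches do not accumulate error; QRAM is needed because Theorem \ref{thm:quantum-dp} relies on it. The main work is then to redo the complexity bookkeeping of Theorem \ref{thm:tw-poly} with the new exponents and to re-optimize over $\beta \in [0,\tfrac12]$.

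Concretely, I would first recompute the time cost of Stage \ref{itm:stage1}. Generating the potential maximal cliques $\Omega$ of size $p$ with a component of size $c$ now costs $O^*\!\left(\sqrt{\binom{n-c}{p-1}}\right)$ via Lemma \ref{thm:quantum-comb}, and for each such $\Omega$ computing $\tw(G,\Omega)$ costs $O^*(1.81691^{n-p})$ via Theorem \ref{thm:quantum-dp}; note this is not a uniform quadratic speedup of the classical Stage \ref{itm:stage1} because the generation step gets a square root but the inner algorithm only drops from $4^{n-p}$ to $1.81691^{n-p}$. Summing over $c$ and $p$ (using the entropy approximation of Theorem \ref{thm:entropy} for $\binom{n-c}{p-1}$ and optimizing $p$) yields an expression of the form $\max_{c \le \beta n}$ of $\big(\text{something}\big)^{n-c}$ times $1.81691^{\text{(tail)}}$, which should be increasing in $c$ and hence maximized at $c=\beta n$. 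For Stage \ref{itm:stage2} I would similarly get, after the substitution $d = n-c-s$ and the same case analysis on whether $\beta n \le \tfrac{n-d}{2}$, a bound of the shape $O^*\!\left(\max_{d} \sqrt{\binom{n-d}{\beta n}}\, \cdot 1.81691^{d}\right)$, again reducing by the $|C_2| \ge |C_1|$ symmetry to using $1.81691^{d}$ rather than $1.81691^{c}$ in the inner call. Balancing the Stage \ref{itm:stage1} and Stage \ref{itm:stage2} exponents over $\beta$ numerically should give the claimed $\beta$ and the time bound $O(1.55374^n)$.

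For the space bound, the generation procedures contribute only $\poly(n)$, so the exponential space comes entirely from the Theorem \ref{thm:quantum-dp} calls: $O^*(1.81691^{n-p})$ in Stage \ref{itm:stage1} and $O^*\!\left(1.81691^{\max(c,\,n-c-s)}\right)$ in Stage \ref{itm:stage2}. Taking the maximum over the ranges of $p$, $c$, $s$ at the optimal $\beta$ should give $O(1.45195^n)$. The main obstacle I anticipate is purely bookkeeping: the speedups on the outer (Grover) and inner (dynamic programming) layers have different exponents, so the tidy cancellations that made the classical analysis (and the fully-quadratic analysis of Theorem \ref{thm:qtwdq}) collapse to clean closed forms no longer occur, and one must carefully identify, for each of the two stages, where in the $(c,p)$ or $(s,c)$ range the product of the two exponential factors is maximized before balancing the two stages against each other in $\beta$. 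There is no conceptual difficulty — the correctness is inherited from Theorem \ref{thm:tw-poly} and the cited quantum subroutines — but the optimization must be carried out numerically with care to land on the stated constants.
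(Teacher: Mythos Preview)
Your Stage~\ref{itm:stage2} analysis and the overall strategy match the paper, but your Stage~\ref{itm:stage1} analysis has a genuine gap that prevents you from reaching the stated bounds. You propose to compute $\tw(G,\Omega)$ directly via Theorem~\ref{thm:quantum-dp} at cost $O^*(1.81691^{\,n-p})$. But $p=|\Omega|$ can be as small as $1$, so the term $\sqrt{\binom{n-c}{p-1}}\cdot 1.81691^{\,n-p}$ at $p=1$ is already $\Theta^*(1.81691^{n})$, and optimizing over $p$ does not help: the sum over $c,p$ is at least of order $1.81691^{n}$, far above $1.55374^{n}$. The same problem kills your space bound for Stage~\ref{itm:stage1}, since $1.81691^{\,n-p}$ with $p=1$ exceeds $1.45195^{n}$.

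The missing ingredient is that in case~(a) of Lemma~\ref{thm:dich} (as actually used in \cite{FV12}) \emph{every} connected component of $G[V-\Omega]$ has size at most $\beta n$, so by Lemma~\ref{thm:tw-comp} one computes $\tw(G,\Omega)$ component-by-component, and the inner call costs $O^*(1.81691^{\,c})$ with $c\le \beta n$ rather than $O^*(1.81691^{\,n-p})$. (The displayed $4^{n-p}$ in Equation~\eqref{eq:cmpl1} is misleading; the equality there only holds with $4^{c}$ in place of $4^{n-p}$, as the quantum proofs make explicit.) With that correction, Stage~\ref{itm:stage1} becomes
\[
O^*\Bigl(\sum_{c=0}^{\beta n}\sqrt{2^{\,n-c}}\cdot 1.81691^{\,c}\Bigr)=O^*\bigl(\sqrt{2^{\,n}}\cdot 1.28475^{\,\beta n}\bigr),
\]
and balancing against your (correct) Stage~\ref{itm:stage2} expression yields $\beta\approx 0.3755$ and the claimed $O(1.55374^{n})$ time; the space is then dominated by Stage~\ref{itm:stage2} and equals $O^*(1.81691^{(1-\beta)n})=O(1.45195^{n})$.
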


\begin{proof}
The time complexity of the first stage is now equal to
\[O^*\left(\sum_{c=0}^{\beta n} \sqrt{2^{n-c}} \cdot 1.81691^c\right) = O^*\left(\sqrt{2^n} \cdot 1.28475^{\beta n}\right).\]
For the second stage, the time is given by
\[O^*\left(\max_{d=\beta n}^{(1-\beta)n}  \sqrt{\binom{n-d}{\beta n}} \cdot 1.81691^d  \right).\]
We can numerically find that $\beta \approx 0.3755$ balances these complexities, which then are both $O(1.55374^n)$.
The space complexity is
\[O^*\left(1.81691^{\max(\beta n,(1-\beta)n)}\right) = O^*\left(1.81691^{(1-\beta)n}\right) = O(1.45195^n).\qedhere\]
\end{proof}

\section{Improved quantum algorithm} \label{sec:main}

We can see that in Theorem \ref{thm:qtwdp} we still have some room for improvement by trading space for time.
This can be done using an additional technique.
The main idea is to make a global precalculation for $\TW_G(S)$ for all subsets $S \subseteq V$ of size at most $\alpha n$, for some constant parameter $\alpha$.
Then, as we will see later, these values can be used in all calls of the quantum dynamic programming because of the properties of treewidth.
For many such calls, this reduces the $O^*(1.817^d)$ running time to something smaller, which in turn reduces the overall time complexity.

\subsection{Asymmetric quantum dynamic programming on the hypercube}

We describe our modification to the quantum dynamic programming algorithm by Ambainis et al.\, \cite{ABIKPV19}.
First, we prove the following lemma that allows us to reutilize the precalculated DP values on the original graph $G$ in the DP calculation in the subgraphs examined by our algorithms.

\begin{lemma} \label{thm:global}
Let $G=(V,E)$ be a graph, and $\chi \subseteq V$ a subset of its vertices.
Suppose that $C$ is a union of some connected components of $G[V - \chi]$.
Then for any $S \subseteq C$, we have $\TW_{G[C \cup \chi]}(S) = \TW_G(S)$.
\end{lemma}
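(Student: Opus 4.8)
The plan is to prove equality $\TW_{G[C\cup\chi]}(S)=\TW_G(S)$ by showing the two quantities are computed by essentially the same linear-ordering optimization, because the extra vertices of $G$ that are not in $C\cup\chi$ cannot influence the relevant connectivity predicates. Recall from the preliminaries that $\TW_G(S)=\min_{\pi\in\Pi(V)}\max_{v\in S}|Q_G(\pi_{<v},v)|$, where $Q_G(A,v)$ counts vertices $w\notin A\cup\{v\}$ that are connected to $v$ by a path inside $G[A\cup\{v,w\}]$. The key structural observation is that $S\subseteq C$ and $C$ is a union of connected components of $G[V-\chi]$, so there are no edges in $G$ between $C$ and $V-\chi-C$; the only neighbors outside $C$ of any vertex of $C$ lie in $\chi$.

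\medskip

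First I would reduce to comparing the optimization over permutations. For the inequality $\TW_{G[C\cup\chi]}(S)\le\TW_G(S)$: take an optimal $\pi\in\Pi(V)$ for the right-hand side and let $\pi'$ be the restriction of $\pi$ to $C\cup\chi$ (keeping the relative order). I would show that for every $v\in S$, $Q_{G[C\cup\chi]}(\pi'_{<v},v)\le Q_G(\pi_{<v},v)$. Indeed, $\pi'_{<v}=\pi_{<v}\cap(C\cup\chi)$, and any path in $G[C\cup\chi][\,\pi'_{<v}\cup\{v,w\}\,]$ witnessing $w\in Q_{G[C\cup\chi]}(\pi'_{<v},v)$ is also a path in $G[\pi_{<v}\cup\{v,w\}]$; moreover $w\in C\cup\chi$, but since $v\in C$ and the vertices of $V-\chi-C$ form separate components with no edges to $C$, in fact reachability from $v$ within $G$ can only use vertices of $C\cup\chi$ anyway, so $w\in C$ and the counts coincide. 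Conversely, for $\TW_G(S)\le\TW_{G[C\cup\chi]}(S)$: take an optimal $\pi'\in\Pi(C\cup\chi)$, extend it to $\pi\in\Pi(V)$ by placing all vertices of $V-\chi-C$ at the very end (after everything in $C\cup\chi$, in any order). Then for $v\in S\subseteq C$ we have $\pi_{<v}\subseteq C\cup\chi$ and $\pi_{<v}=\pi'_{<v}$, and a path from $v$ to some $w$ inside $G[\pi_{<v}\cup\{v,w\}]$ stays inside $C\cup\chi$ because $v\in C$ and $C$ is a union of components of $G[V-\chi]$ — so $w$ must be in $C\cup\chi$ too — whence $Q_G(\pi_{<v},v)=Q_{G[C\cup\chi]}(\pi'_{<v},v)$. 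Taking the max over $v\in S$ and the min over orderings gives both inequalities.

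\medskip

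The heart of the argument, and the step I expect to need the most care, is the claim that in $Q_G(A,v)$ with $v\in C$ and $A\subseteq V$, only vertices of $C\cup\chi$ are ever reachable. The precise statement: if $w\in Q_G(A,v)$ then $w\in C\cup\chi$, and moreover the witnessing path stays within $C\cup\chi$. This follows because any path starting at $v\in C$ and leaving $C$ must pass through $N(C)$, and since $C$ is a union of connected components of $G[V-\chi]$, we have $N(C)\subseteq\chi$; so the path can only reach $\chi$ and, from $\chi$, it might re-enter $C$ but cannot reach $V-\chi-C$ without… actually it could reach $V-\chi-C$ through $\chi$ if the path is allowed to pass through $\chi$. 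This is the subtle point: I need to confirm that when $v\in S\subseteq C$ and $A=\pi_{<v}$ in the \emph{extended} ordering, the set $\{v,w\}\cup A$ that bounds the path does not help, i.e.\ that counting $w\in V-\chi-C$ reachable through $\chi$ does not change the value — and here the two sides differ only in that the left-hand problem simply does not contain those $w$ at all. So the correct bookkeeping is: in the extension direction, those $w\in V-\chi-C$ are placed \emph{after} $v$ in $\pi$, so they could in principle be counted in $Q_G(\pi_{<v},v)$ if reachable through $\chi$; I must therefore either argue they are not reachable, or restrict $C$'s role more carefully. The clean fix is to note $S\subseteq C$ and use instead that for the extension we only need $\TW_G(S)\le \TW_{G[C\cup\chi]}(S)$ with $C$ a union of components — and to kill the through-$\chi$ paths, observe that $\chi$ appears in $\pi_{<v}$ only partially, but any $w\in V-\chi-C$ connected to $v$ via $G[\pi_{<v}\cup\{v,w\}]$ would force a path $v\rightsquigarrow \chi\rightsquigarrow w$; this does contribute. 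Hence the honest route is the restriction direction for one inequality and, for the other, to choose the extension placing $V-\chi-C$ last \emph{and} to invoke Lemma~\ref{thm:tw-comp}/\ref{thm:clique}-style component separation: since $\TW_G(S)$ with $S\subseteq C$ is unaffected by other components, formally because in an optimal ordering for $G$ we may push all of $V-\chi-C$ to the end without increasing $\max_{v\in S}|Q_G(\pi_{<v},v)|$ (moving them later only removes them from each $\pi_{<v}$, which can only decrease the $Q$-values for $v\in S$). I would write this monotonicity lemma — moving a vertex not in $S$ to a later position in $\pi$ does not increase $\max_{v\in S}|Q_G(\pi_{<v},v)|$ — explicitly, and then the argument closes cleanly.
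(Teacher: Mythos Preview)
You correctly set up both inequalities and handle the direction $\TW_{G[C\cup\chi]}(S)\le\TW_G(S)$ cleanly. You also correctly diagnose the obstruction in the other direction: after extending an optimal $\pi'\in\Pi(C\cup\chi)$ by appending $V-\chi-C$, vertices of $\chi$ may lie in $\pi'_{<v}$ and open paths from $v$ through $\chi$ to some $w\in V-\chi-C$, inflating $|Q_G(\pi_{<v},v)|$.

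However, your proposed fix --- the monotonicity lemma ``moving a vertex not in $S$ to a later position does not increase $\max_{v\in S}|Q_G(\pi_{<v},v)|$'' --- is false, even in the specific setting of the lemma. Take $G$ on $\{v,b,u\}$ with edges $\{v,b\}$ and $\{b,u\}$, and $S=\{v\}$, $\chi=\{b\}$, $C=\{v\}$, so $u\in V-\chi-C$. With $\pi=(b,u,v)$ one has $|Q_G(\{b,u\},v)|=0$ (no vertices left outside $\{b,u,v\}$), but moving $u$ later to $\pi=(b,v,u)$ gives $|Q_G(\{b\},v)|=1$, since $u$ itself is now a candidate and is reached via $v\text{--}b\text{--}u$. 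The flaw in your reasoning is the parenthetical ``moving them later only removes them from each $\pi_{<v}$, which can only decrease the $Q$-values'': removing $u$ from $\pi_{<v}$ makes $u$ a new candidate for being counted, and if $u$ is reachable through $\pi_{<v}$ that offsets (or exceeds) any loss.

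The paper closes the gap by a different and simpler device: it first invokes Lemma~\ref{thm:recur} to choose the optimal permutation for $\TW_{G[C\cup\chi]}(S)$ so that $S$ is a \emph{prefix}. Then for every $v\in S$ one has $\pi_{<v}\subset S\subseteq C$, hence $\pi_{<v}\cap\chi=\varnothing$. After appending $V-C-\chi$, any path from $v\in C$ to $w\in V-C-\chi$ in $G[\pi_{<v}\cup\{v,w\}]$ would have to traverse $\chi$, but $\chi$ is disjoint from $\pi_{<v}\cup\{v,w\}$, so no such $w$ contributes and $Q_G(\pi'_{<v},v)=Q_{G[C\cup\chi]}(\pi_{<v},v)$. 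This is the missing idea: rather than trying to rearrange an arbitrary optimal ordering, fix the shape of $\pi'$ at the outset so that no vertex of $\chi$ ever precedes any $v\in S$.
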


\begin{proof}
Examine the permutations $\pi$ achieving $\TW_{G[C \cup \chi]}(S) = \min_{\pi \in \Pi(C \cup \chi)} \max_{v \in S} |Q_G(\pi_{<v},v)|$.
As a direct consequence of Lemma \ref{thm:recur}, there exists such a permutation $\pi$ with the property that $S$ is its prefix.
Now let $\pi' \in \Pi(V)$ be a permutation obtained by adding the vertices of $V - C - \chi$ at the end of $\pi$ in any order.
Examine any vertex $u \in V - C - \chi$ and any $v \in S$.
Since $u$ and $v$ are located in different connected components of $G[C - \chi]$, any path from $u$ to $v$ in $G$ passes through some vertex of $\chi$.
However, $\pi_{<v} \cap \chi = \varnothing$, as $\pi_{<v} \subset S$.
Then we can conclude that $Q_{G[C \cup \chi]}(\pi_{<v},v) = Q_G(\pi'_{<v},v)$, as $u$ cannot contribute to $Q$.
Therefore, $\TW_G(S) \leq \TW_{G[C \cup \chi]}(S)$.
On the other hand, $\TW_G(S) \geq \TW_{G[C \cup \chi]}(S)$, as additional vertices cannot decrease $\TW$.
\end{proof}

Now we are ready to describe our quantum dynamic programming procedure.
Suppose that all values of $\TW_G(S)$ for sets with $|S| \leq \alpha n$ have been precalculated beforehand and stored in QRAM, where $\alpha \in \left[0,\frac{1}{2}\right]$ is some fixed parameter.
Suppose that we have fixed a subset $\chi \subseteq V$, and our task is to calculate $\tw(G[C \cup \chi],\chi)$ for a union $C$ of some connected components of $G[V - \chi]$.
By Lemma \ref{thm:suffix}, it is equal to $\max(\TW_{G[C \cup \chi]}(C),|\chi|-1)$.
Since $|\chi|$ is known, our goal is to compute $\TW_{G[C \cup \chi]}(C)$.

Let $n=|V|$ and $n'=|C|$.
If $n' \leq \alpha n$, then $\TW_{G[C \cup \chi]}(C) = \TW_G(C)$ by Lemma \ref{thm:global} and is known from the precalculated values.
Hence, assume that $n' > \alpha n$.
Pick some natural $k$, we will call this \emph{the number of layers}.
Let $\lambda_1= \frac{\alpha n}{n'}$, and pick constants $\lambda_2 < \ldots < \lambda_k < \mu < \rho_k < \ldots < \rho_1 < 1$, with $\lambda_1 < \lambda_2$.
Then define collections
\begin{align*}
    \mathcal L_i &= \{S \subseteq C \mid |S| = \lambda_i n'\},\\ 
    \mathcal M &= \{S \subseteq C \mid |S| = \mu n'\},\\ 
    \mathcal R_i &= \{S \subseteq C \mid |S| = \rho_i n'\}.
\end{align*}
We call these collections \emph{layers}: we can represent subsets $S \subseteq C$ as vertices on the hypercube of dimension $n'$; then these layers are defined as the subsets of vertices with some fixed Hamming weight, see Figure \ref{fig:hypercube}.
For all sets $S$ corresponding to the vertices in the crosshatched area (such that $|S| \leq \alpha n$), the value of $\TW_{G[C \cup \chi]}(S) = \TW_G(S)$ is known from the assumed precalculation.

\begin{figure}[H]
    \centering
    \includegraphics[scale=0.7]{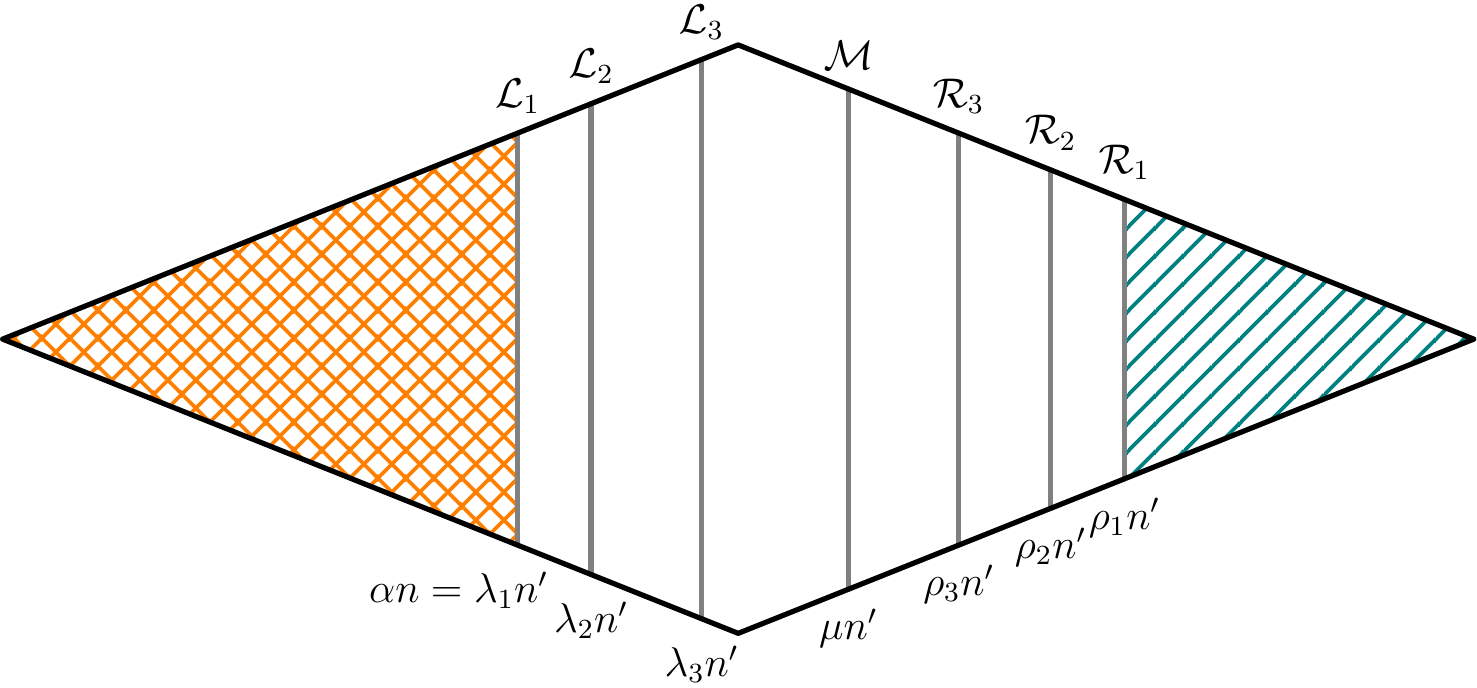}
    \caption{A schematic representation of layers in the Boolean hypercube with $k=3$.}
    \label{fig:hypercube}
\end{figure}

Now we will describe the quantum procedure.
Denote $G'=G[C \cup \chi]$.
Also denote $\TW'_{G'}(S) = \TWR_{G'}(S,C - S)$ and note that $\TW_{G'}(S) = \TWR_{G'}(\varnothing,S)$.
Informally, calculating $\TW_{G'}(S)$ means finding the best ordering for the vertices $S$ as a prefix of the permutation, and $\TW'_{G'}(S)$ means finding the best ordering for the vertices $C - S$, where $C - S$ is in the middle of permutation, followed by some ordering of $\chi$.

Algorithm \ref{alg:asymdp} is exactly the algorithm of \cite{ABIKPV19}, with the exception that the precalculation is performed only for suffixes (and the precalculation for prefixes comes ``for free'').
The idea is to find the optimal path between the vertices $s$ and $t$ with the smallest and highest Hamming weight in the hypercube.
First, we use Grover's search over the vertex $v_{k+1}$ in the middle layer $\mathcal M$.
Then we search independently for the best path from $s$ to $v_{k+1}$ and from $v_{k+1}$ to $t$; the optimal path from $s$ to $t$ is their concatenation.
To find the best path from $s$ to $v_{k+1}$, we use Grover's search over the vertex $v_k$ on the layer $\mathcal L_k$ such that there exists a path from $v_k$ to $v_{k+1}$.
Then we find the best path from $v_k$ to $v_{k+1}$ by recursively using the $O^*(1.817^{n'})$ algorithm (where $n'$ is the dimension of the hypercube with $v_k$ and $v_{k+1}$ being the smallest and largest weight vertices, respectively).
We combine it with the best path from $s$ to $v_k$, which we find in the similar way (fixing $v_{k-1}$, $\ldots$, $v_1$).
The value of the optimal path from $s$ to $v_1$ is known from the global precalculation we assumed took place before the algorithm.
The optimal path from $v_{k+1}$ to $t$ is found analogously; only to know the value of the best path from vertices in $\mathcal R_1$ to $t$, we have to precalculate these values ``from the back'' using Bellman \& Held-Karp dynamic programming in the beginning of the algorithm.
The formal description of this algorithm for treewidth is given in Algorithm \ref{alg:asymdp}.

\bigskip
\begin{breakablealgorithm}
\caption{Asymmetric quantum dynamic programming algorithm.}\label{alg:asymdp}
\begin{enumerate}
    \item \label{itm:prec} For all $S \in \mathcal R_1$, calculate and store in QRAM the values $\TW'_{G'}(S)$ using the recurrence
    \[\TW'_{G'}(S) = \min_{v \in C - S}  \max\left(\TW'_{G'}(S \cup \{v\}), |Q_{G'}(S,v)|\right) .\]
    This follows from Lemma \ref{thm:partition} with $k=1$.
    \item \label{itm:qmf} Use quantum minimum finding over sets $S \in \mathcal M$ to find the answer,
    \[\TW_{G'}(C) = \min_{S \in \mathcal M} \max\left(\TW_{G'}(S),\TW'_{G'}(S)\right).\]
    This also follows from Lemma \ref{thm:partition} with $k=\mu n'$.
    
    \begin{itemize}
        \item To find $\TW_{G'}(S)$, we use the recursive procedure $\textsc{BestPrefix}_i(G',S)$.
        Its value is equal to $\TW_{G'}(S)$, and it requires $S \in \mathcal L_i$ (if $i=k+1$, then $S \in \mathcal M$).
        The needed value is then given by $\textsc{BestPrefix}_{k+1}(G',S)$.
        The description of $\textsc{BestPrefix}_i(G',S)$:
        \begin{itemize}
            \item If $i=1$, return $\TW_{G'}(S)=\TW_G(S)$ that is stored in QRAM.
            \item If $1 < i \leq k+1$, then use quantum minimum finding over the sets $T \in \mathcal L_{i-1}$ to find
            \[\TW_{G'}(S) = \min_{\substack{T \in \mathcal L_{i-1} \\ T \subset S}} \max\left(\textsc{BestPrefix}_{i-1}(G',T),\TWR_{G'}(T,S - T)\right).\]
            Again, this recurrence follows from Lemma \ref{thm:partition} with $k = \lambda_{i-1} n'$.
            The value of $\TWR_{G'}(T,S - T)$ is calculated by the quantum dynamic programming from Theorem \ref{thm:quantum-twr} and requires $O^*(1.817^{|S| - |T|})$ time and QRAM space.
        \end{itemize}
        \item To find $\TW'_{G'}(S)$, we similarly use the recursive procedure $\textsc{BestSuffix}_i(G',S)$.
        Its value is equal to $\TW'_{G'}(S)$, and it requires $S \in \mathcal R_i$ (if $i=k+1$, then $S \in \mathcal M$).
        The needed value is then given by $\textsc{BestSuffix}_{k+1}(G',S)$.
        The description of $\textsc{BestSuffix}_i(G',S)$:
        \begin{itemize}
            \item If $i=1$, return $\TW'_{G'}(S)$ stored in QRAM from the precalculation in Step \ref{itm:prec}.
            \item If $1 < i \leq k+1$, then use quantum minimum finding over the sets $T \in \mathcal R_{i-1}$ to find
            \[\TW'_{G'}(S) = \min_{\substack{T \in \mathcal R_{i-1} \\ S \subset T}} \max\left(\TWR_{G'}(S,T - S),\textsc{BestSuffix}_{i-1}(G',T)\right).\]
            Again, this recurrence follows from Lemma \ref{thm:partition} with $k = \rho_{i-1}n' - \rho_in'$.
            The value of $\TWR_{G'}(S,T - S)$ is calculated by the quantum dynamic programming from Theorem \ref{thm:quantum-twr} and requires $O^*(1.817^{|T| - |S|})$ time and QRAM space.
        \end{itemize}
    \end{itemize}
\end{enumerate}
\end{breakablealgorithm}
\newpage

Finally, note that with $\lambda_1 \approx 0.28448$, the time complexity of Algorithm \ref{alg:asymdp} becomes $O^*(1.817^{n'})$, as this is the same parameter for the precalculation layer as in \cite{ABIKPV19}.
Thus if it happens that $\alpha n < 0.28448 n'$, the asymmetric version of the algorithm will have time complexity larger than $O^*(1.817^{n'})$, so in that case it is better to call the algorithm from Theorem \ref{thm:quantum-dp}.
Therefore, our procedure for calculating $\TW_{G[C \cup \chi]}(C)$ is as follows:

\begin{algorithm} 
\caption{Quantum algorithm calculating $\tw(G[C \cup \chi],\chi)$ assuming global precalculation.}
\label{alg:subg}
Assume that $\TW_G(S)$ are stored in QRAM for all $|S| \leq \alpha n$.
\begin{itemize}
    \item If $n' \leq \alpha n$, fetch $\TW_{G[C \cup \chi]}(C) = \TW_G(C)$ from the global precalculation.
    \item Else if $\alpha n \leq 0.28448 n'$, find $\TW_{G[C \cup \chi]}(C)$ using the $O(1.817^{n'})$ algorithm of Theorem \ref{thm:quantum-dp}.
    \item Else calculate $\TW_{G[C \cup \chi]}(C)$ using Algorithm \ref{alg:asymdp}.
\end{itemize}
Return $\tw(G[C \cup \chi],\chi) = \max\left(\TW_{G[C \cup \chi]}(C), |\chi|-1\right)$.
\end{algorithm}

\subsection{Complexity of the quantum dynamic programming}

We will estimate the time complexity of Algorithm \ref{alg:asymdp}. The space complexity will not be necessary, because for the final treewidth algorithm it will be dominated by the global precalculation, as we will see later.
\begin{itemize}
    \item The time of the precalculation Step \ref{itm:prec} is dominated by the size of the layer $\mathcal R_1$.
    It is equal to $O^*\left(|\mathcal R_1|\right) = O^*(\binom{n'}{\rho_1 n'})$, which by Lemma \ref{thm:entropy} is
    \[O^*\left(2^{\be(\rho_1) n'}\right).\]
    \item Let the time of a call of $\textsc{BestPrefix}_i(G',S)$ be $T_i$, it can be calculated as follows.
    If $i=1$,
    \[T_1 = O^*(1),\]
    as all we need to do is to fetch the corresponding value $\TW_G(S)$ from QRAM.
    If $i>1$, then quantum minimum finding examines all $T \in \mathcal L_{i-1}$ such that $T \subset S$.
    The number of such $T$ is $\binom{|S|}{|T|} = \binom{\lambda_{i}n'}{\lambda_{i-1}n'}$ (for generality, denote $\lambda_{k+1}=\mu$).
    Again, by Lemma \ref{thm:entropy}, this is at most $2^{\be(\lambda_{i-1}/\lambda_{i}) \cdot \lambda_i n'}$.
    The call to $\textsc{BestPrefix}_{i-1}(G',T)$ requires time $T_{i-1}$ and  calculating $\TWR_{G'}(T,S - T)$ with the algorithm from Theorem \ref{thm:quantum-twr} requires time $O^*(1.817^{|S| - |T|}) = O^*(1.817^{(\lambda_i-\lambda_{i-1})n'})$.
    Putting these estimates together, we get that for $i>1$,
    \[T_i = O^*\left(\sqrt{2^{\be\left(\frac{\lambda_{i-1}}{\lambda_{i}}\right) \cdot \lambda_i n'}} \cdot \max\left( T_{i-1}, 1.817^{(\lambda_i-\lambda_{i-1})n'}\right)\right).\]
    \item The time $T'_i$ for $\textsc{BestSuffix}_i(G',S)$ is calculated analogously.
    We can check the precalculated values from Step \ref{itm:prec} in
    \[T'_1 = O^*(1)\]
    and (taking $\rho_{k+1} = \mu$) for $i > 1$,
    \[T'_i = O^*\left(\sqrt{2^{\be\left(\frac{1-\rho_{i-1}}{1-\rho_{i}}\right) \cdot (1-\rho_i) n'}} \cdot \max\left( T'_{i-1}, 1.817^{(\rho_{i-1}-\rho_i)n'}\right)\right).\]
    \item Lastly, the number of sets examined in the first quantum minimum finding in Step \ref{itm:qmf} is equal to the size of $\mathcal M$, which is $\binom{n'}{\mu n'} = 2^{\be(\mu)n'}$ by Lemma \ref{thm:entropy}.
    Therefore, Step \ref{itm:qmf} requires time
    \[O^*\left( \sqrt{2^{\be(\mu)n'}} \cdot \max\left( T_{k+1}, T'_{k+1}\right)\right).\]
\end{itemize}

For any of the complexities $\mathcal T$ examined here, let's look at $\log_2(\mathcal T)/n'$; since we are interested in the exponential complexity, we need to investigate only the constant $c$ in $O^*(2^{cn})$.
Also note that $\log_2(1.817) \approx 0.862$.
This results in the following optimization program
\begin{align*}
    \text{minimize}&\hspace{1cm}T(\lambda_1)=\max\left( \be(\rho_1), \frac{\be(\mu)}{2} + \max\left(t_{k+1},t'_{k+1}\right) \right) \\
    \text{subject to}
    &\hspace{1cm} \lambda_1 < \ldots < \lambda_k < \lambda_{k+1} = \mu = \rho_{k+1} < \rho_k < \ldots  < \rho_1 < 1\\
    &\hspace{1cm} t_i = \frac{1}{2}\cdot \be\left(\frac{\lambda_{i-1}}{\lambda_{i}}\right) \cdot \lambda_i + \max\left( t_{i-1}, 0.862(\lambda_i - \lambda_{i-1})\right) \quad &\text{for $i \in [2,k+1]$}\\
    &\hspace{1cm} t_1 = 0\\
    &\hspace{1cm} t'_i = \frac{1}{2}\cdot\be\left(\frac{1-\rho_{i-1}}{1-\rho_{i}}\right) \cdot (1-\rho_i) + \max\left( t'_{i-1}, 0.862(\rho_{i-1} - \rho_i)\right) \quad &\text{for $i \in [2,k+1]$}\\
    &\hspace{1cm} t'_1 = 0
\end{align*}

We can solve this program numerically and find the time complexity, depending on the value of $\lambda_1$.
Note that for $\lambda_1 \leq 0.28448$ the $O(1.817^{n'})$ symmetric quantum dynamic programming is more efficient, so we don't have to calculate the complexity in that case.
Figure \ref{fig:graph} shows the time complexity $T(\lambda_1)$ for $k=0,1,2,3$.
We can see that the advantage of adding additional layers quickly becomes negligible.

\begin{figure}[H]
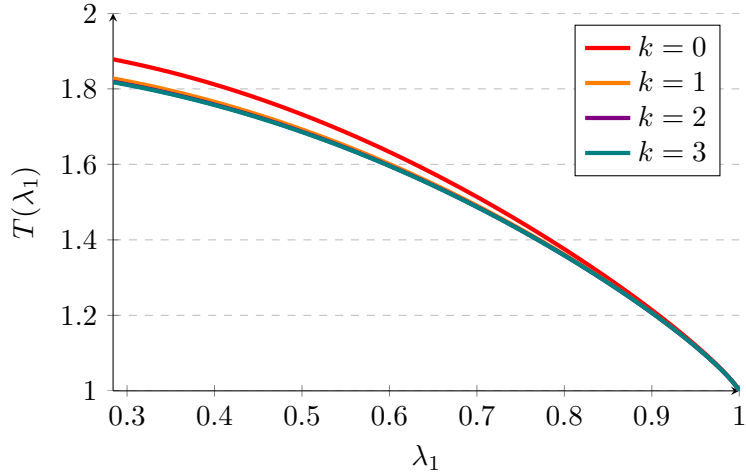

\begin{center}

\end{center}
\caption{Running time of the asymmetric quantum dynamic programming algorithm.} \label{fig:graph}
\end{figure}

\subsection{Final quantum algorithm}

Now we can give the improved quantum dynamic programming algorithm for treewidth.
It requires two constant parameters: $\alpha, \beta \in \left[0,\frac{1}{2}\right]$.
The value $\alpha n$ gives the limit for the global precalculation, and $\beta n$ is the cutoff point for the two stages as in Algorithm \ref{alg:classic-dq}.

\begin{algorithm}
\caption{Improved quantum algorithm for treewidth.}
\begin{enumerate}
    \item \label{itm:iqt1} Calculate $\TW_G(S)$ for all subsets $S$ such that $|S| \leq \alpha n$ and store them in QRAM.
    \item \label{itm:iqt2} For $c$ from $0$ to $\beta n$ and $p$ from $1$ to $n-c$ examine the set of potential maximal cliques $\Omega$ of size $p$ with a connected component of size $c$.
    Apply Lemma \ref{thm:quantum-comb} to Lemma \ref{thm:pmc} to find the minimum of $\tw(G,\Omega)$ in $O^*\left(\sqrt{\binom{n-c}{p-1}}\right)$ iterations.
    Calculate the value of $\tw(G,\Omega)$ using Algorithm \ref{alg:subg}.
    \item \label{itm:iqt3} For $s$ from $1$ to $(1-2\beta)n$ and from $c$ from $\beta n$ to $(1-\beta)n-s$ examine the set of subsets $C$ such that $|C|=c$ and $N(C)=s$.
    Let $S = N(C)$; then $\tw(G,S)$ is equal to the maximum of $\tw(G[S \cup C],S)$ and $\tw(G[V - C],S)$ by Lemma \ref{thm:tw-comp}.
    Find the minimum of $\tw(G,S)$ using Lemma \ref{thm:quantum-comb} in $O^*\left(\sqrt{\binom{c+s}{s}}\right)$ iterations.
    Calculate the values of $\tw(G[S \cup C],S)$ and $\tw(G[V - C],S)$ using Algorithm \ref{alg:subg}.
    \item \label{itm:iqt4} Return the minimum width of all examined tree decompositions.
\end{enumerate}
\end{algorithm}

We can now calculate the complexity similarly as in Theorem \ref{thm:qtwdp}.

\begin{theorem} \label{thm:main}
Assuming the QRAM data structure, there is a bounded-error quantum algorithm that finds the exact treewidth of a graph on $n$ vertices in $O(1.53793^n)$ time and space.
\end{theorem}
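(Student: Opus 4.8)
The plan is to run the three-stage scheme of Algorithm~\ref{alg:classic-dq} with its enumeration steps made quantum exactly as in Theorem~\ref{thm:qtwdp}, but (i) adding Step~\ref{itm:iqt1}, which precomputes and stores in QRAM the values $\TW_G(S)$ for all $|S|\le\alpha n$, and (ii) solving every fixed-bag subproblem by Algorithm~\ref{alg:subg} rather than by the plain quantum dynamic programming of Theorem~\ref{thm:quantum-dp}. Four quantities, all functions of $\alpha,\beta\in\left[0,\frac12\right]$, must then be balanced: the time of Step~\ref{itm:iqt1}, the space of Step~\ref{itm:iqt1}, the time of Step~\ref{itm:iqt2}, and the time of Step~\ref{itm:iqt3}; the theorem will follow by optimizing $\alpha,\beta$ numerically.

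First I would bound Step~\ref{itm:iqt1}: by the recurrence of Lemma~\ref{thm:recur} all $\TW_G(S)$ with $|S|\le\alpha n$ can be computed in order of increasing size with $\poly(n)$ work per subset, so both its time and its space are $O^*\left(\sum_{j\le\alpha n}\binom{n}{j}\right)=O^*\left(2^{\be(\alpha)n}\right)$ by Theorem~\ref{thm:entropy}; and, as noted in Section~\ref{sec:main}, this precalculation dominates the QRAM used later inside Algorithm~\ref{alg:asymdp} (which is invoked only on subproblems of dimension $n'<\alpha n/0.28448$), so the overall space is $O^*\left(2^{\be(\alpha)n}\right)$. Next I would record the running time $\mathrm{DP}(n')$ of Algorithm~\ref{alg:subg} on a subproblem of hypercube dimension $n'$ (with $n=|V|$ and the $\alpha n$ values already stored): it is $O^*(1)$ if $n'\le\alpha n$; it is $O^*\left(T\left(\tfrac{\alpha n}{n'}\right)^{n'}\right)$, with $T(\cdot)$ the base produced by the optimization program of Section~\ref{sec:main}, if $0.28448<\tfrac{\alpha n}{n'}<1$; and it is $O^*\left(1.81691^{n'}\right)$ if $\tfrac{\alpha n}{n'}\le0.28448$. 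Note that $\mathrm{DP}$ is nondecreasing in $n'$.

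Then I would expand Steps~\ref{itm:iqt2} and~\ref{itm:iqt3} by the same accounting as in the proof of Theorem~\ref{thm:qtwdp}. Step~\ref{itm:iqt2}, which runs Grover's search (Lemma~\ref{thm:quantum-comb} applied to Lemma~\ref{thm:pmc}) over the potential maximal cliques that have a component of size $c\le\beta n$ and then Algorithm~\ref{alg:subg} on that component, costs
\[
O^*\left(\max_{c=0}^{\beta n}\sqrt{2^{n-c}}\cdot\mathrm{DP}(c)\right).
\]
Step~\ref{itm:iqt3}, after the reductions of the proof of Theorem~\ref{thm:tw-poly} (so that with $d=n-c-s$ one has $n-c-s\ge c$, hence $\max(\mathrm{DP}(c),\mathrm{DP}(n-c-s))=\mathrm{DP}(d)$), costs
\[
O^*\left(\max_{d=\beta n}^{(1-\beta)n}\sqrt{\binom{n-d}{\beta n}}\cdot\mathrm{DP}(d)\right).
\]
Each maximum splits into pieces according to which of the three regimes of $\mathrm{DP}$ is in force: I expect $\mathrm{DP}(c)$ to fall in the asymmetric regime for $\alpha n<c\le\beta n$, and $\mathrm{DP}(d)$ to be in the asymmetric regime for $d$ near $\beta n$ and in the $O^*(1.81691^{d})$ regime for $d$ near $(1-2\beta)n$.

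Finally I would optimize: taking $\log_2$ and dividing by $n$ turns the four quantities into explicit functions of $\alpha,\beta$ assembled from the binary entropy and from $T(\cdot)$, with the Step~\ref{itm:iqt2} and Step~\ref{itm:iqt3} expressions still carrying an inner maximization over $c$ (resp.\ $d$) in which $T$ and $\be$ interact; balancing all four and solving numerically should give $\alpha,\beta$ (with $\be(\alpha)=\log_2(1.53793)$ at the balanced optimum) and a common exponent, hence time and space $O(1.53793^n)$. Correctness is straightforward: every decomposition examined is legal; Lemma~\ref{thm:dich} together with Lemmas~\ref{thm:tw-comp} and~\ref{thm:suffix} places an optimal decomposition among those examined; Lemma~\ref{thm:global} makes the precalculated $\TW_G(S)$ legitimate inside every subgraph call of Algorithm~\ref{alg:subg}; and the bounded-error implementation of Grover's search keeps the error controlled through the nested minimum finding. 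The main obstacle is the complexity bookkeeping rather than any new idea: one must solve the (already nontrivial) program defining $T(\lambda_1)$ to adequate precision, keep track of which of the three $\mathrm{DP}$ regimes governs each range of $n'$ as $\alpha,\beta$ vary, and carry out the joint numerical optimization over $(\alpha,\beta)$ that simultaneously balances the precalculation time, the precalculation space, and the two enumeration-plus-DP times; this multivariate balance point is what pins down the constant $1.53793$.
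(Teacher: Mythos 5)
Your proposal matches the paper's own proof essentially step for step: the same global precalculation bound $O^*\bigl(\binom{n}{\alpha n}\bigr)$, the same three-regime cost function for Algorithm~\ref{alg:subg} (which the paper denotes $\mathcal T(\lambda_1)^{n'}$), the same Step~\ref{itm:iqt2}/Step~\ref{itm:iqt3} expressions with $\mathcal T(\alpha n/c)^c$ and $\mathcal T(\alpha n/d)^d$ respectively, and the same numerical balancing over $(\alpha,\beta)$ (the paper reports $\alpha\approx 0.15447$, $\beta\approx 0.38640$ with $k=3$ layers, and indeed $\be(\alpha)\approx\log_2 1.53793$). The only minor slip is writing ``near $(1-2\beta)n$'' where the upper end of the range of $d$ is $(1-\beta)n$, but this does not affect the argument.
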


\begin{proof}
First, we choose $\alpha$ such that it balances the time complexity of the global precalculation (Step \ref{itm:iqt1}) and the rest of the algorithm (Steps \ref{itm:iqt2}--\ref{itm:iqt4}).
The space complexity of this step asymptotically is equal to its time complexity.
Therefore, the space complexity of this algorithm is equal to
$O^*\left(\binom{n}{\alpha n}\right).$

Denote the time complexity of Algorithm \ref{alg:subg} with $|C|=n'$ and some chosen $\lambda_1$ by $O^*(\mathcal T(\lambda_1)^{n'})$.
For fixed $\alpha$ and $n'$, $\lambda_1$ is calculated as $\alpha n / n'$.
Then
\[
    \mathcal T(\lambda_1) = 
    \begin{cases}
    1,&\text{if $\lambda_1 \geq 1$,}\\
    1.81691,&\text{if $\lambda_1 \leq 0.28448$,}\\
    T(\lambda_1),&\text{otherwise.}
    \end{cases}
\]
Similarly as we have obtained Equations (\ref{eq:cmpl1}, \ref{eq:cmpl2}) in the proof of Theorem \ref{thm:tw-poly}, we can also calculate the time complexity here.
The time complexity of Step \ref{itm:iqt2} now is equal to
\[O^*\left(\sum_{c=0}^{\beta n} \sqrt{2^{n-c}} \cdot \mathcal T\left( \frac{\alpha n}{c} \right)^c \right).\]
The running time of Step \ref{itm:iqt3} is given by
\[O^*\left(\max_{d=\beta n}^{(1-\beta)n} \sqrt{\binom{n-d}{\beta n}} \cdot \mathcal T\left( \frac{\alpha n}{d} \right)^d\right).\]
We can numerically find that $\alpha \approx 0.15447$ and $\beta \approx 0.38640$ balance these complexities, which then are all $O(1.53793^n)$.
In our numerical calculation, we have used $k=3$ for Algorithm \ref{alg:subg}.
\end{proof}

\section{Acknowledgements}

This work was supported by the project ``Quantum algorithms: from complexity theory to experiment'' funded under ERDF programme 1.1.1.5.

\printbibliography

\end{document}